\newtheorem{lem}{Lemma}
\newtheorem{thm}[lem]{Theorem}
\newtheorem{cor}[lem]{Corollary}
\newtheorem{pro}[lem]{Proposition}
\newcommand{\iex}{\mathbf{i_{\textrm{ext}}}}
\renewcommand{\i}{\mathbf{i}}
\renewcommand{\v}{\mathbf{v}}
\newcommand{\cross}{+}
\newcommand{\im}{\mathrm{im}}
\newcommand{\E}{\mathbb{E}}
\newcommand{\var}{\mathbf{Var}}
\renewcommand{\P}{\mathbb{P}}
\newcommand{\Otilde}{\widetilde{O}}
\newcommand{\onevec}{\mathbf{1}}
\newcommand{\R}{\mathbb{R}}
\newcommand{\Ztilde}{\widetilde{Z}}
\newcommand{\ztilde}{\tilde{z}}
\newcommand{\wmin}{w_{min}}
\newcommand{\wmx}{w_{{max}}}
\def\norm#1{\left\| #1 \right\|}
\def\abs#1{\left|#1  \right|}
\begin{document}
\title{Graph Sparsification by Effective Resistances\thanks{
This material is based upon work supported by the National Science Foundation
under Grants No. CCF-0707522 and CCF-0634957.
Any opinions, findings, and conclusions or recommendations expressed in this material are those of the author(s) and do not necessarily reflect the views of the National Science Foundation.
}} 
\author{
Daniel A. Spielman\\
Program in Applied Mathematics and \\
Department of Computer Science \\
Yale University
\and
Nikhil Srivastava\\
Department of Computer Science \\
Yale University
}

\maketitle
\begin{abstract}
We present a nearly-linear time algorithm that produces high-quality spectral sparsifiers
of weighted graphs. Given as input a weighted graph $G=(V,E,w)$ and a parameter
$\epsilon>0$, we produce a weighted subgraph $H=(V,\tilde{E},\tilde{w})$ of $G$
such that $|\tilde{E}|=O(n\log n/\epsilon^2)$ and for all vectors $x\in\R^V$
\begin{equation}\label{abseq} (1-\epsilon)\sum_{uv\in E}(x(u)-x(v))^2w_{uv}\le \sum_{uv\in
\tilde{E}}(x(u)-x(v))^2\tilde{w}_{uv} \le
(1+\epsilon)\sum_{uv\in E}(x(u)-x(v))^2w_{uv}.\end{equation}
This improves upon the spectral sparsifiers constructed by Spielman and Teng, which had
$O(n\log^c n)$ edges for some large constant $c$, and upon 
  the cut sparsifiers of Bencz\'ur and Karger, which only
satisfied (\ref{abseq}) for $x\in\{0,1\}^V$.

A key ingredient in our algorithm is a subroutine of independent interest: a
nearly-linear time algorithm that builds a data structure from which we can
query the approximate effective resistance between any two vertices in a graph in $O(\log
n)$ time.

\end{abstract}
\section{Introduction}
The goal of sparsification is to approximate a given graph $G$ by a
sparse graph $H$ on the same set of vertices. If $H$ is close to $G$ in some
appropriate metric, then $H$ can be used as a proxy for $G$ in computations
without introducing too much error. At the same time, since $H$ has very few
edges, computation with and storage of $H$ should be cheaper. 

We study the notion of spectral sparsification introduced
  by Spielman and Teng~\cite{st08}.
Spectral sparsification 
  was
  inspired by the notion of cut sparisification introduced by
  Bencz\'ur and Karger~\cite{bk}
to
accelerate cut algorithms whose running time depends on the
number of edges. They gave a nearly-linear time procedure which takes a
graph $G$ on $n$ vertices with $m$ edges and a parameter $\epsilon>0$, and outputs a weighted subgraph $H$
with $O(n\log n/\epsilon^2)$ edges such that the weight of every cut in
$H$ is within a factor of $(1\pm \epsilon)$ of its weight in $G$. This was used 
to turn Goldberg and Tarjan's $\Otilde(mn)$ max-flow algorithm \cite{goldtar} into an $\Otilde(n^2)$ algorithm
for approximate $st$-mincut, and appeared more recently as the first step of
an $\Otilde(n^{3/2}+m)$-time $O(\log^2n)$ approximation algorithm for sparsest cut \cite{krv}.

The cut-preserving guarantee of \cite{bk} is equivalent to satisfying
(\ref{abseq}) for all $x\in\{0,1\}^n$, 
which are the characteristic vectors of cuts. 
Spielman and Teng \cite{st04,st08} devised stronger sparsifiers which extend (\ref{abseq}) to all $x\in\R^n$, 
but have $O(n\log^c n)$ edges for some large constant $c$. They used these sparsifiers to construct preconditioners for
symmetric diagonally-dominant matrices, which led to the first nearly-linear
time solvers for such systems of equations.

In this work, we construct sparsifiers that achieve the same guarantee as Spielman and Teng's
but with $O(n\log n/\epsilon^2)$ edges, thus improving on 
both \cite{bk} and \cite{st04}. Our sparsifiers are subgraphs of the original graph and can be computed
in $\Otilde(m)$ time by random sampling, where the sampling probabilities are given by
the effective resistances of the edges.
While this is conceptually much simpler than the recursive partitioning approach of
\cite{st04}, we need to solve $O (\log n)$ linear systems to compute the effective resistances quickly, and we do this 
  using Spielman and Teng's linear equation solver.

\subsection{Our Results}

Our main idea is to include each edge of $G$ in the sparsifier $H$ with probability
proportional to its effective resistance. The effective resistance of an edge is known
to be equal to the probability that the edge
appears in a random spanning tree of $G$ (see, e.g., \cite{doylesnell} or
\cite{bollobas}), and was proven in \cite{chandra} to be
proportional to the commute time between the endpoints of the edge.
We show how to approximate the effective resistances of edges in $G$
quickly and prove that sampling according to these approximate values yields a good sparsifier.

To define effective resistance, identify $G=(V,E,w)$ with an electrical network on $n$ nodes in which each edge
$e$ corresponds to a link of conductance $w_e$ (i.e., a resistor of resistance $1/w_e$).
Then the effective resistance $R_e$ across an
edge $e$ is the potential difference induced across it when a unit current
is injected at one end of $e$ and extracted at the other end of $e$. 
Our algorithm can now be stated as follows.
\begin{center}
\begin{tabular}[c]{p{6in}}
\hline
$H=\textbf{Sparsify}(G,q)$\\
Choose a random edge $e$ of $G$ with probability $p_e$ proportional to
$w_eR_e$,
and add $e$ to $H$ with weight $w_e/qp_e$.
Take $q$ samples independently with replacement,
summing weights if an edge is chosen more than once.\\
\hline
\end{tabular}
\end{center}

Recall that the {\em Laplacian} of a weighted graph is given by $L=D-A$ where $A$ is the weighted
adjacency matrix $(a_{ij})=w_{ij}$ and $D$ is the diagonal matrix $(d_{ii})=\sum_{j\ne i}w_{ij}$ of weighted degrees.
Notice that the quadratic form associated with $L$ is just $x^TLx=\sum_{uv\in E}(x(u)-x(v))^2w_{uv}$.
Let $L$ be the Laplacian of $G$ and let $\tilde{L}$ be the Laplacian of $H$. Our main theorem is that
if $q$ is sufficiently large, then the quadratic forms of $L$ and $\tilde{L}$ are close.

\begin{thm}\label{mainthm} Suppose $G$ and $H=\mathbf{Sparsify}(G,q)$ have
Laplacians $L$ and $\tilde{L}$ respectively, 
and $1/\sqrt{n}<\epsilon\le 1$. If $q=9C^2n\log n/\epsilon^2$, where $C$ is the constant in Lemma \ref{lemrud} and if $n$ is sufficiently large, 
then with probability at least $1/2$
\begin{equation}\label{xtlx} \forall x\in\R^n\quad (1-\epsilon)x^TLx\le x^T\tilde{L}x\le
(1+\epsilon)x^TLx.\end{equation}
\end{thm}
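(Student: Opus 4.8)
The plan is to recast the two‑sided bound (\ref{xtlx}) as a single bound on the spectral norm of a random matrix and then invoke the concentration estimate of Lemma~\ref{lemrud}. Write $L=B^TWB$, where $B$ is the $m\times n$ signed edge–vertex incidence matrix (its row for $e=uv$ is $b_e=\chi_u-\chi_v$) and $W=\mathrm{diag}(w_e)$, and assume $G$ is connected (the general case is identical with $n-1$ replaced by the number of non‑trivial eigenvalues of $L$). Consider the orthogonal projection $\Pi=L^{+/2}(B^TWB)L^{+/2}=L^{+/2}LL^{+/2}$ onto $\im(L)$, and set $v_e=\sqrt{w_e}\,L^{+/2}b_e\in\R^n$, so that $\Pi=\sum_{e\in E}v_ev_e^T$ and, crucially, $\norm{v_e}^2=w_e\,b_e^TL^+b_e=w_eR_e$. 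Since $\mathrm{tr}\,\Pi=\mathrm{rank}(L)=n-1$ we get $\sum_e w_eR_e=n-1$, so the sampling distribution of $\mathbf{Sparsify}$ is $p_e=\norm{v_e}^2/(n-1)$; and if $e_1,\dots,e_q$ are the $q$ draws, then the edge weights $w_{e_i}/(qp_{e_i})$ (summed over repeats) give $\tilde L=\frac1q\sum_{i=1}^q \frac{w_{e_i}}{p_{e_i}}b_{e_i}b_{e_i}^T$, hence $\widetilde\Pi\defeq L^{+/2}\tilde LL^{+/2}=\frac1q\sum_{i=1}^q\frac1{p_{e_i}}v_{e_i}v_{e_i}^T$.

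Next I would check that $\widetilde\Pi$ is exactly the empirical average to which Lemma~\ref{lemrud} applies. Let $y$ be the random vector equal to $v_e/\sqrt{p_e}$ with probability $p_e$. Then $\E\,yy^T=\sum_e p_e\cdot p_e^{-1}v_ev_e^T=\Pi$, so $\norm{\E\,yy^T}=1$, while $\norm{y}^2=\norm{v_e}^2/p_e=n-1$ holds \emph{deterministically}, so $\norm{y}\le\sqrt n$. Lemma~\ref{lemrud}, applied to $y_1,\dots,y_q$ i.i.d.\ copies of $y$, then bounds $\E\norm{\widetilde\Pi-\Pi}$ by roughly $C\sqrt n\,\sqrt{(\log q)/q}$. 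Plugging in $q=9C^2n\log n/\epsilon^2$ turns this into $\tfrac{\epsilon}{3}\sqrt{(\log q)/\log n}$; since $\epsilon>1/\sqrt n$ forces $q<9C^2n^2\log n$, we have $\log q\le\tfrac94\log n$ once $n$ is large enough, so $\E\norm{\widetilde\Pi-\Pi}\le\epsilon/2$, and Markov's inequality gives $\norm{\widetilde\Pi-\Pi}\le\epsilon$ with probability at least $1/2$.

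Finally I would translate the event $\norm{\widetilde\Pi-\Pi}\le\epsilon$ back into (\ref{xtlx}). For an arbitrary $x\in\R^n$, replacing $x$ by its component orthogonal to $\onevec$ changes neither $x^TLx$ nor $x^T\tilde Lx$, so assume $x\in\im(L)$ and put $z=L^{1/2}x$. Then $L^{+/2}z=\Pi x=x$, so $x^TLx=z^T\Pi z=\norm{z}^2$ and $x^T\tilde Lx=z^T\widetilde\Pi z$; hence $\abs{x^T\tilde Lx-x^TLx}=\abs{z^T(\widetilde\Pi-\Pi)z}\le\epsilon\norm{z}^2=\epsilon\,x^TLx$, which is precisely (\ref{xtlx}).

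The substantive work is concentrated in the first paragraph: realizing that choosing an edge with probability proportional to $w_eR_e$ is the same as choosing the vector $v_e$ with probability proportional to $\norm{v_e}^2$, and that after rescaling by $1/\sqrt{p_e}$ \emph{every} sampled vector has the identical length $\sqrt{n-1}$. It is exactly this "isotropic rescaling," together with $\norm{\Pi}=1$, that makes Lemma~\ref{lemrud} yield the bound $O(\sqrt{n\log q/q})$ and hence lets $q=\Theta(n\log n/\epsilon^2)$ samples suffice. Everything afterward — the concentration inequality itself, which we may quote, and the closing quadratic‑form manipulation — is routine; the only point requiring care is the bookkeeping of $\log q$ versus $\log n$, which is exactly what the hypotheses $\epsilon>1/\sqrt n$ and "$n$ sufficiently large" are there to handle.
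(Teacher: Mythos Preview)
Your argument is correct and follows essentially the same strategy as the paper: both proofs observe that sampling edges with probability $p_e\propto w_eR_e$ amounts to sampling rescaled vectors of deterministic norm $\sqrt{n-1}$, apply Lemma~\ref{lemrud} with $M=\sqrt{n-1}$ and $\|\E yy^T\|=1$, and then use Markov's inequality. The only cosmetic difference is that the paper works with the $m\times m$ projection $\Pi=W^{1/2}BL^{+}B^TW^{1/2}$ and its columns $\Pi(\cdot,e)\in\R^m$ (proving the translation step as a separate Lemma~\ref{lemS}), whereas you take the dual route with $v_e=\sqrt{w_e}\,L^{+/2}b_e\in\R^n$ and the $n\times n$ projection $L^{+/2}LL^{+/2}$; these are the two Gram matrices of the same rectangular matrix $W^{1/2}BL^{+/2}$, so the applications of Lemma~\ref{lemrud} coincide.
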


Sparsifiers that satisfy this condition preserve many properties of the graph.
The Courant-Fischer Theorem tells us that
\[ \lambda_i=\max_{S:\dim(S)=k}\min_{x\in S} \frac{x^TLx}{x^Tx}.\] 
Thus, if $\lambda_{1}, \dots , \lambda_{n}$
  are the eigenvalues of $L$ and $\tilde{\lambda}_{1}, \dots , \tilde{\lambda}_{n}$ are the eigenvalues
  of $\tilde{L}$, then we have
\[
  (1-\epsilon) \lambda_{i} \leq \tilde{\lambda}_{i} \leq (1+\epsilon) \lambda_{i},
\]
and the eigenspaces spanned by corresponding eigenvalues are related.
As the eigenvalues of the normalized Laplacian are given by
\[
 \lambda_i=\max_{S:\dim(S)=k}\min_{x\in S} \frac{x^T D^{-1/2} L D^{-1/2} x}{x^Tx},
\]
and are the same as the eigenvalues of the walk matrix $D^{-1} L$,
  we obtain the same relationship between the eigenvalues of the walk matrix of the original
  graph and its sparsifier.
Many properties of graphs and random walks are known to be revealed by their spectra (see 
  for example
\cite{bollobas,chung,GodsilRoyle}). 
The existence of sparse subgraphs which retain these properties is interesting its own right;
indeed, expander graphs can be viewed as constant degree sparsifiers for the complete graph.

We remark that the condition (\ref{xtlx}) also implies
\[ \forall x\in\R^n\quad \frac{1}{1+\epsilon}x^TL^{+}x\le x^T\tilde{L}^{+}x\le
\frac{1}{1-\epsilon}x^TL^{+}x,\]
where $L^{+}$ is the pseudoinverse of $L$.
Thus sparsifiers also approximately preserve the effective resistances between
vertices, since for vertices $u$ and $v$, the effective resistance between them is given by the formula
  $(\chi_{u} - \chi_{v})^{T} L^{+} (\chi_{u} - \chi_{v})$, where $\chi_{u}$ is the elementary unit vector with a coordinate
  1 in position $u$.

We prove Theorem \ref{mainthm} in Section 3. At the end of Section 3, we prove that
the spectral guarantee (\ref{xtlx}) of Theorem \ref{mainthm} is not harmed too
much if use approximate effective resistances for sampling instead of exact ones(Corollary \ref{approxok}).

In
Section 4, we show how to compute approximate effective resistances in
nearly-linear time, which is essentially optimal.
 The tools we use to do this are Spielman and Teng's
nearly-linear time solver \cite{st04,st06} and the Johnson-Lindenstrauss Lemma
\cite{jl,ach}. Specifically, we  prove the following theorem, in which
$R_{uv}$ denotes the effective resistance between vertices $u$ and $v$.

\begin{thm}\label{th2}  There is an $\Otilde(m (\log r)/\epsilon^2)$
time algorithm which on input $\epsilon>0$ and $G=(V,E,w)$ with
  $r = w_{max} / w_{min}$  computes a 
$(24\log n/\epsilon^2)\times n$ matrix $\Ztilde $
such that with probability at least $1-1/n$
 \[ (1-\epsilon)R_{uv}\le \|\Ztilde (\chi_u-\chi_v)\|^2\le (1+\epsilon)R_{uv}\]
 for every pair of vertices $u,v\in V$.
\end{thm}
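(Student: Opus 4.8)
The plan is to exhibit every effective resistance $R_{uv}$ as a squared Euclidean distance in a space of dimension $O(\log n/\epsilon^{2})$, and then to compute the corresponding embedding approximately using a fast Laplacian solver. Write $L=B^{T}WB$, where $B$ is the $m\times n$ signed edge--vertex incidence matrix of $G$ and $W$ is the $m\times m$ diagonal matrix of the weights $w_{e}$, and set $Z\defeq W^{1/2}BL^{+}$, an $m\times n$ matrix. Then $Z^{T}Z=L^{+}B^{T}WBL^{+}=L^{+}LL^{+}=L^{+}$, and since $\chi_{u}-\chi_{v}\in\im(L)$ this gives the identity
\[ R_{uv}=(\chi_{u}-\chi_{v})^{T}L^{+}(\chi_{u}-\chi_{v})=\norm{Z(\chi_{u}-\chi_{v})}^{2}.\]
Thus the points $Z\chi_{v}\in\R^{m}$ form an isometric embedding of the vertices whose pairwise squared distances are exactly the effective resistances; the only problems are that this embedding has $m$ coordinates and that writing it down seems to require $L^{+}$.

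To cut the dimension down I would apply the Johnson--Lindenstrauss lemma in the form of \cite{ach}: let $Q$ be a random $k\times m$ matrix with $k=24\log n/\epsilon^{2}$ and independent, suitably scaled $\pm1$ entries. Applied to the fewer than $n^{2}$ vectors $\{Z(\chi_{u}-\chi_{v})\}_{u,v}$, a union bound over pairs gives, with probability at least $1-1/n$,
\[ (1-\epsilon)R_{uv}\le\norm{QZ(\chi_{u}-\chi_{v})}^{2}\le(1+\epsilon)R_{uv}\qquad\text{for all }u,v\in V,\]
so $QZ$ is an acceptable output. To produce it without ever forming $L^{+}$, I would first compute $Y\defeq QW^{1/2}B$ explicitly: each of its $k$ rows is a linear combination of the $2m$ nonzero entries of $W^{1/2}B$, so this takes $O(km)$ time. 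Since $QZ=YL^{+}$, the $i$-th row of $QZ$ is $z_{i}^{T}$, where $z_{i}$ solves $Lz_{i}=\bar y_{i}$ and $\bar y_{i}\in\R^{n}$ is the $i$-th row of $Y$; note $\bar y_{i}=B^{T}W^{1/2}Q_{i}\in\im(B^{T})=\im(L)$, so each system is consistent. Solving all $k$ of these systems with the Spielman--Teng solver \cite{st04,st06} to relative accuracy $\delta$ in the $L$-norm costs $\Otilde(m\log(1/\delta))$ time apiece, where the suppressed factors include the logarithm of the condition number of $L$. Since $\lambda_{\max}(L)\le 2n\wmx$ while the least nonzero eigenvalue of $L$ is at least $\wmin/\mathrm{poly}(n)$, this condition number is at most $\mathrm{poly}(n)\cdot r$; this is the source of the $\log r$ factor, and it also shows that taking $\delta$ inverse-polynomial in $n,m,1/\epsilon$ is affordable. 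With $k=O(\log n/\epsilon^{2})$ solves the total time is $\Otilde(m(\log r)/\epsilon^{2})$.

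It remains to control the error introduced by solving the systems only approximately, and this is the one genuinely delicate step. Let $\ztilde_{i}$ be the computed solution, so that $\Ztilde$ has $i$-th row $\ztilde_{i}^{T}$, and let $z_{i}=L^{+}\bar y_{i}$. The $i$-th coordinate of $(\Ztilde-QZ)(\chi_{u}-\chi_{v})$ is $\langle\ztilde_{i}-z_{i},\chi_{u}-\chi_{v}\rangle$, which by Cauchy--Schwarz with respect to the $L$-inner product is at most $\norm{\ztilde_{i}-z_{i}}_{L}\sqrt{(\chi_{u}-\chi_{v})^{T}L^{+}(\chi_{u}-\chi_{v})}\le\delta\norm{z_{i}}_{L}\sqrt{R_{uv}}$ by the solver's guarantee. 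Moreover $\norm{z_{i}}_{L}^{2}=\bar y_{i}^{T}L^{+}\bar y_{i}=Q_{i}^{T}\Pi Q_{i}\le\norm{Q_{i}}^{2}$, where $\Pi\defeq W^{1/2}BL^{+}B^{T}W^{1/2}$ is an orthogonal projection and $\norm{Q_{i}}^{2}=O(m/k)$. Summing over the $k$ coordinates yields $\norm{(\Ztilde-QZ)(\chi_{u}-\chi_{v})}\le O(\delta\sqrt{m})\sqrt{R_{uv}}$, so choosing $\delta$ inverse-polynomial in $m$ and $1/\epsilon$ makes this at most $\epsilon\sqrt{R_{uv}}$. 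Combining with the triangle inequality and the Johnson--Lindenstrauss bound on $\norm{QZ(\chi_{u}-\chi_{v})}$ then gives $\norm{\Ztilde(\chi_{u}-\chi_{v})}^{2}=(1\pm O(\epsilon))R_{uv}$ for all pairs, and rescaling $\epsilon$ by a constant finishes the proof. The crux is precisely this conversion of the solver's relative-error bound into an absolute bound on the distortion of each queried distance: it relies both on the estimate $\norm{z_{i}}_{L}^{2}\le\norm{Q_{i}}^{2}$ and on the crude spectral bounds on $L$ that keep the needed accuracy $\delta$, and hence the running time, under control.
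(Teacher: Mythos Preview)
Your overall architecture matches the paper exactly: write $R_{uv}=\norm{W^{1/2}BL^{+}(\chi_u-\chi_v)}^{2}$, apply Johnson--Lindenstrauss via a random $\pm 1/\sqrt{k}$ matrix $Q$ with $k=24\log n/\epsilon^{2}$, form $Y=QW^{1/2}B$ in $O(mk)$ time, and recover the rows of $QW^{1/2}BL^{+}$ by $k$ calls to \texttt{STSolve}.

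Where you diverge is in the error analysis for the approximate solves, and your route is different from---and in fact cleaner than---the paper's. The paper (its Lemma~\ref{lem:error2}) bounds $\norm{(Z-\Ztilde)(\chi_u-\chi_v)}$ by walking along a path from $u$ to $v$, passing to a Frobenius norm over all edges, and then invoking the lower bound $R_{uv}\ge 2/(n\wmx)$ of Proposition~\ref{pro:resistances}; this is what forces $\delta$ to depend on $\wmin/\wmx$ and produces the $\log r$ in the running time. Your argument instead applies Cauchy--Schwarz in the $L$-inner product, $\abs{\langle \ztilde_i-z_i,\chi_u-\chi_v\rangle}\le \norm{\ztilde_i-z_i}_L\,\norm{\chi_u-\chi_v}_{L^{+}}=\norm{\ztilde_i-z_i}_L\sqrt{R_{uv}}$, together with $\norm{z_i}_L^{2}=Q_i\Pi Q_i^{T}\le \norm{Q_i}^{2}=m/k$. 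This yields $\delta=O(\epsilon/\sqrt{m})$, independent of the weights, so your analysis actually eliminates the $\log r$ factor that the theorem statement carries.

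One remark: your sentence attributing the $\log r$ to ``the logarithm of the condition number of $L$'' hidden in the Spielman--Teng solver is not right. Theorem~\ref{st} runs in $\Otilde(m\log(1/\delta))$ with no dependence on the condition number; that is precisely the point of the solver. In the paper the $\log r$ comes solely from the choice of $\delta$ in \eqref{eqn:stprecision}, not from the solver itself. Since your own bound on $\delta$ does not involve $r$, you should simply drop that clause---your proof then gives $\Otilde(m/\epsilon^{2})$, which certainly implies the stated $\Otilde(m(\log r)/\epsilon^{2})$.
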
 
Since $\Ztilde(\chi_u-\chi_v)$ is simply the difference of the corresponding two columns
  of $\Ztilde$,
we can query the approximate effective resistance between any pair of
vertices $(u,v)$ in time $O(\log n/\epsilon^2)$, and for all the edges in time $O(m\log n/\epsilon^2)$.
By Corollary \ref{approxok}, this yields an $\Otilde(m (\log r)/ \epsilon^{2})$
time for sparsifying graphs, as advertised.

In Section 5, we show that $H$ can be made close to $G$  in some additional
  ways which make it more useful for preconditioning systems of linear equations.

\subsection{Related Work}
Batson, Spielman, and Srivastava \cite{bss} have given a deterministic 
  algorithm that constructs sparsifiers of size $O(n/\epsilon^2)$ in 
  $O(mn^3/\epsilon^2)$ time. 
While this is too slow to be useful in applications, it is 
  optimal in terms of the tradeoff between sparsity and quality of 
  approximation and can be viewed as generalizing expander graphs.
Their construction parallels ours in that it reduces the task of spectral sparsification to approximating the matrix
  $\Pi$ defined in Section 3; however, their method for selecting edges is iterative
  and more delicate than the random sampling described in this paper.

In addition to the graph sparsifiers of \cite{bk,bss,st04},
there is a large body of work on sparse \cite{ahk,am} and low-rank
\cite{fkv,am,rudversh,drineas, drineas03pass} 
approximations for general matrices. The
algorithms in this literature provide guarantees of the form $\|A-\tilde{A}\|_2\le\epsilon$, where $A$ is the original
matrix and $\tilde{A}$ is obtained by entrywise or columnwise sampling of $A$. This is 
analogous to 
  satisfying (\ref{abseq}) only for vectors $x$ in the span of the dominant eigenvectors of $A$; thus,
if we were to use these sparsifiers on graphs, they would only preserve the large cuts. Interestingly, our proof
uses some of the same machinery as the low-rank approximation result of Rudelson and Vershynin \cite{rudversh} --- the
sampling of edges in our algorithm corresponds to picking $q=O(n\log n)$ columns at random from a 
certain rank $(n-1)$ matrix of dimension $m\times m$ (this is the matrix $\Pi$
introduced in Section 3).

The use of effective resistance as a distance in graphs has recently gained attention
  as it is often more useful than the ordinary geodesic distance in a graph.
For example, in small-world graphs, all vertices will be close to one another, but those with a smaller
  effective resistance distance are connected by more short paths.
See, for instance \cite{fouss1,fouss2}, which use effective resistance/commute time
as a distance measure in social network graphs.

\section{Preliminaries}
\subsection{The Incidence Matrix and the Laplacian} \label{secincidence}
Let $G=(V,E,w)$ be a connected weighted undirected graph with $n$ vertices and $m$
edges and edge weights $w_e > 0$. If we orient the edges of $G$
arbitrarily, we can write its Laplacian as $L=B^TWB$, where $B_{m\times n}$ is the
{\em signed edge-vertex incidence matrix}, given by
\[ B(e,v)= \left\{\begin{array}{ll} 1 &\textrm{if $v$ is $e$'s head}\\ -1 &\textrm{if
$v$ is $e$'s tail}\\ 0 & \textrm{otherwise}\end{array}\right.\]
and $W_{m\times m}$ is the diagonal matrix with $W(e,e)=w_e$. 
Denote the
row vectors of $B$ by $\{b_e\}_{e\in E}$ and the span of its columns by
$\mathbb{B}=\im(B)\subseteq \R^m$ (also called the {\em cut space} of $G$ \cite{GodsilRoyle}).
Note that $b_{(u,v)}^T=(\chi_v-\chi_u)$.

It is immediate that $L$ is positive semidefinite since
\[ x^TLx=x^TB^TWBx=\|W^{1/2}Bx\|_2^2\ge 0\quad\textrm{ for every $x\in\R^n$.}\]
We also have $\ker(L)=\ker(W^{1/2}B)=\textrm{span}(\onevec)$, since 
\begin{align*}
x^TLx=0 &\iff \|W^{1/2}Bx\|_2^2=0
\\&\iff \sum_{uv\in E}w_{uv}(x(u)-x(v))^2=0
\\&\iff x(u)-x(v)=0\quad\textrm{for all edges $(u,v)$}
\\&\iff \textrm{ $x$ is constant, since $G$ is connected.}\end{align*}

\subsection{The Pseudoinverse}\label{secpseudo}
Since $L$ is symmetric we can diagonalize it and write
\[ L=\sum_{i=1}^{n-1}\lambda_i u_iu_i^T\]
where $\lambda_1,\ldots,\lambda_{n-1}$ are the nonzero eigenvalues of $L$ and
$u_1,\ldots,u_{n-1}$ are a corresponding set of orthonormal eigenvectors. The {\em
Moore-Penrose Pseudoinverse} of $L$ is then defined as
\[ L^\cross=\sum_{i=1}^{n-1}\frac{1}{\lambda_i} u_iu_i^T.\]
Notice that $\ker(L)=\ker(L^\cross)$ and that
\[ LL^\cross=L^\cross L=\sum_{i=1}^{n-1}u_iu_i^T,\]
which is simply the projection onto the span of the
nonzero eigenvectors of $L$ (which are also the eigenvectors of $L^\cross$). Thus, $LL^\cross=L^\cross L$ is the identity on
$\im(L)=\ker(L)^\perp=\mathrm{span}(\onevec)^\perp$. We will rely on this fact
heavily in the proof of Theorem \ref{mainthm}.

\subsection{Electrical Flows}
Begin by arbitrarily orienting the edges of $G$ as in Section 2.1. 
We will use the same notation as \cite{guat} to describe electrical flows on graphs:
for a vector $\iex(u)$ of currents injected at the vertices, let $\i(e)$ be the currents induced in
the edges (in the direction of orientation) and $\v(u)$ the potentials induced at the vertices. By Kirchoff's
current law, the sum of the currents entering a vertex is equal to the amount
injected at the vertex:
\[ B^T\i=\iex.\]
By Ohm's law, the current flow in an edge is equal to the potential difference
across its ends times its conductance:
\[\i = WB\v.\]
Combining these two facts, we obtain
\[ \iex=B^T(WB\v)=L\v.\]
If $\iex\perp \mathrm{span}(\mathbf{1})=\ker(L)$ --- i.e., if the total amount of current injected is equal to the total amount extracted --- then we can write
\[\v=L^\cross\iex\]
by the definition of $L^\cross$ in Section \ref{secpseudo}. 

Recall that the {\em effective resistance} between two vertices $u$ and $v$ is defined as
the potential difference induced between them when a unit current is injected at
one and extracted at the other.
We will derive an algebraic expression for the effective resistance in terms of
$L^\cross$.  To inject and extract a unit current across the endpoints of an edge $e=(u,v)$,
we set $\iex=b_e^T=(\chi_v-\chi_u)$, which
is clearly orthogonal to $\mathbf{1}$. The potentials induced by $\iex $
at the vertices are given by $\v=L^\cross b_e^T$; to measure the potential difference across
$e=(u,v)$, we simply multiply by $b_e$ on the left:
\[ \v(v)-\v(u)=(\chi_v-\chi_u)^T\v=b_eL^\cross b_e^T.\]
It follows that the effective resistance across
$e$ is given by $b_eL^\cross b_e^T$ and that the matrix $BL^\cross B^T$
has as its diagonal entries $BL^\cross B^T(e,e)=R_e$.

\section{The Main Result}
We will prove Theorem \ref{mainthm}. Consider the matrix $\Pi = W^{1/2}BL^\cross B^TW^{1/2}$. Since we know
$BL^+B^T(e,e)=R_e$, the diagonal entries of $\Pi$ are
$\Pi(e,e)=\sqrt{W(e,e)}R_e\sqrt{W(e,e)}=w_eR_e$. $\Pi$ has some notable properties.
\begin{lem}[Projection Matrix]\label{lempi} (i) $\Pi$ is a projection matrix. (ii)
$\im(\Pi)=\im(W^{1/2}B)=W^{1/2}\mathbb{B}$. 
(iii) The eigenvalues of
$\Pi$ are $1$ with multiplicity $n-1$ and $0$ with multiplicity $m-n+1$. 
(iv) $\Pi(e,e)=\|\Pi(\cdot,e)\|^2$. 

\end{lem}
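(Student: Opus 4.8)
The plan is to prove (i)–(iii) by exploiting the fact that $L^+$ acts as an inverse of $L$ on $\mathrm{im}(L) = \mathrm{span}(\mathbf 1)^\perp$, and that $W^{1/2}B$ has full column rank equal to $n-1$; part (iv) will then follow immediately from (i) together with symmetry of $\Pi$.

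First I would check (i), that $\Pi^2 = \Pi$. Writing $\Pi = W^{1/2}B L^+ B^T W^{1/2}$, we compute
\[
\Pi^2 = W^{1/2}B L^+ (B^T W B) L^+ B^T W^{1/2} = W^{1/2} B L^+ L L^+ B^T W^{1/2},
\]
using $L = B^T W B$ from Section~\ref{secincidence}. Since $L^+ L L^+ = L^+$ (indeed $L^+ L$ is the projection onto $\mathrm{im}(L)$, which fixes $\mathrm{im}(L^+)$), this equals $W^{1/2} B L^+ B^T W^{1/2} = \Pi$. Also $\Pi$ is manifestly symmetric, so $\Pi$ is an orthogonal projection. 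For (ii), I would argue two inclusions. Clearly $\mathrm{im}(\Pi) \subseteq \mathrm{im}(W^{1/2} B)$. Conversely, take any $y = W^{1/2} B x \in \mathrm{im}(W^{1/2}B)$; I may assume $x \perp \mathbf 1 = \ker(L)$ (subtracting a constant vector does not change $Bx$). Then $L^+ L x = x$, so
\[
\Pi y = W^{1/2} B L^+ B^T W B x = W^{1/2} B L^+ L x = W^{1/2} B x = y,
\]
showing $y \in \mathrm{im}(\Pi)$ and hence $\mathrm{im}(\Pi) = \mathrm{im}(W^{1/2}B) = W^{1/2}\mathbb B$.

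For (iii), since $\Pi$ is an orthogonal projection its only eigenvalues are $0$ and $1$, and the multiplicity of the eigenvalue $1$ is $\dim \mathrm{im}(\Pi) = \dim \mathrm{im}(W^{1/2}B)$. Because $W$ is invertible (all $w_e > 0$), $\mathrm{rank}(W^{1/2}B) = \mathrm{rank}(B)$, and from the kernel computation in Section~\ref{secincidence} we have $\ker(W^{1/2}B) = \mathrm{span}(\mathbf 1)$, so $\mathrm{rank}(B) = n - 1$. Thus $\Pi$ has eigenvalue $1$ with multiplicity $n-1$ and eigenvalue $0$ with multiplicity $m - (n-1) = m-n+1$. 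Finally, (iv): since $\Pi = \Pi^2 = \Pi^T \Pi$, the $(e,e)$ diagonal entry is $\Pi(e,e) = (\Pi^T \Pi)(e,e) = \sum_f \Pi(f,e)^2 = \|\Pi(\cdot,e)\|^2$, the squared norm of the $e$-th column.

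None of the steps is a serious obstacle; the one place to be careful is the backward inclusion in (ii), where one must remember that an arbitrary preimage $x$ can be replaced by its projection onto $\mathbf 1^\perp$ before invoking $L^+ L x = x$ — this is exactly the ``identity on $\mathrm{im}(L)$'' property emphasized at the end of Section~\ref{secpseudo}.
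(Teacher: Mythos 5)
Your proof is correct and follows essentially the same route as the paper's: the same computation for $\Pi^2=\Pi$ via $L=B^TWB$ and the pseudoinverse identity, the same two-inclusion argument for (ii) with the careful choice of preimage $x\perp\mathbf{1}$, the same rank count for (iii), and the same $\Pi=\Pi^T\Pi$ observation for (iv). No discrepancies to flag.
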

\begin{proof} To see (i), observe that 
\begin{align*}
 \Pi^2 &= (W^{1/2}BL^\cross B^TW^{1/2})(W^{1/2}BL^\cross B^TW^{1/2})
 \\&= W^{1/2}BL^\cross (B^TWB)L^\cross B^TW^{1/2}
 \\&= W^{1/2}BL^\cross LL^\cross B^TW^{1/2} \quad\textrm{ since $L=B^TWB$}
 \\&=W^{1/2}BL^\cross B^TW^{1/2}\\
 &\qquad\textrm{ since $L^\cross L$ is the identity
 on $\im(L^\cross$)} 
 \\&=\Pi. \end{align*} 

For (ii), we have
\[\im(\Pi)=\im(W^{1/2}BL^\cross B^TW^{1/2})\subseteq \im(W^{1/2}B).\]
To see the other inclusion, assume $y\in\im(W^{1/2}B)$. Then we can choose
$x\perp\ker(W^{1/2}B)=\ker(L)$ such that $W^{1/2}Bx=y$. But now
\begin{align*}
\Pi y &= W^{1/2}BL^\cross B^TW^{1/2}W^{1/2}Bx
\\&= W^{1/2}BL^\cross Lx\quad\textrm{since $B^TWB=L$}
\\&= W^{1/2}B x\quad\textrm{since $L^\cross Lx=x$ for $x\perp\ker(L)$}
\\&= y.\end{align*}
Thus $y\in\im(\Pi)$, as desired.

For (iii), recall from Section \ref{secincidence} that
$\dim(\ker(W^{1/2}B))=1$. Consequently, $\dim(\im(\Pi))=\dim(\im(W^{1/2}B))=n-1$.
But since $\Pi^2=\Pi$, the eigenvalues of $\Pi$ are all $0$ or $1$, and as $\Pi $
projects onto a space of dimension $n-1$, it must have exactly $n-1$ nonzero
eigenvalues.

(iv) follows from $\Pi^2(e,e)=\Pi(\cdot,e)^T\Pi(\cdot,e)$, since $\Pi$
is symmetric.
\end{proof}

To show that $H=(V,\tilde{E},\tilde{w})$ is a good sparsifier for $G$, we need to show that the
quadratic forms $x^TLx$ and
$x^T\tilde{L}x$ are close. We start by reducing the problem of preserving $x^TLx$ to that of
preserving $y^T\Pi y$.  This will be much nicer since the eigenvalues of $\Pi$
are all $0$ or $1$, so that any matrix $\tilde{\Pi}$ which approximates $\Pi$ in
the spectral norm (i.e., makes $\|\tilde{\Pi}-\Pi\|_2$ small) also preserves its
quadratic form.

We may describe the outcome of $H=\mathbf{Sparsify}(G,q)$ by
the following random matrix:
\begin{equation}\label{defS} S(e,e)=\frac{\tilde{w_e}}{w_e}=
\frac{\textrm{(\# of times $e$ is sampled)}}{qp_e}.\end{equation}
$S_{m\times m}$ is a nonnegative diagonal matrix and the random entry $S(e,e)$ specifies the `amount' of edge
$e$ included in $H$ by $\mathbf{Sparsify}$. For example $S(e,e)=1/qp_e$ if $e$
is sampled once, $2/qp_e$ if it is sampled twice, and zero if it is not sampled
at all. 
The weight of $e$ in $H$ is now given by $\tilde{w_e}=S(e,e)w_e$, and we
can write the Laplacian of $H$ as:
\[ \tilde{L}=B^T\tilde{W}B=B^TW^{1/2}SW^{1/2}B\] since
$\tilde{W}=WS=W^{1/2}SW^{1/2}$.  The scaling of weights by $1/qp_e$ in $\mathbf{Sparsify}$ implies that $\E
\tilde{w_e}=w_e$ (since $q$ independent samples are taken, each with probability
$p_e$), and thus $\E S=I$ and $\E \tilde{L}=L$. 

We can now prove the following lemma, which says that if $S$ does not
distort $y^T\Pi y$ too much then $x^TLx$ and $x^T\tilde{L}x$ are close.
\begin{lem}\label{lemS} Suppose $S$ is a nonnegative diagonal matrix such that
\[\|\Pi S \Pi - \Pi \Pi\|_2\le \epsilon.\] 
Then 
\[ \forall x\in\R^n\quad(1-\epsilon)x^TLx\le x^T\tilde{L}x\le (1+\epsilon)
x^TLx,\]
where $L=B^TWB$ and $\tilde{L}=B^TW^{1/2}SW^{1/2}B$.
\end{lem}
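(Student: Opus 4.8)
The plan is to pull everything back from $\R^n$ to $\R^m$ so that the hypothesis on $\Pi S\Pi-\Pi\Pi$ can be applied verbatim. Recall from Section~\ref{secincidence} that $x^TLx=x^TB^TWBx=\|W^{1/2}Bx\|_2^2$. I would set $y\defeq W^{1/2}Bx$, so that $x^TLx=\|y\|_2^2$, and, since $\tilde L=B^TW^{1/2}SW^{1/2}B$, also $x^T\tilde Lx=(W^{1/2}Bx)^TS(W^{1/2}Bx)=y^TSy$. The entire argument then hinges on one observation: by Lemma~\ref{lempi}(ii), $y\in\im(W^{1/2}B)=\im(\Pi)$, and hence $\Pi y=y$.

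Using $\Pi y=y$ together with the symmetry of $\Pi$, I can insert copies of $\Pi$ at will:
\[\|y\|_2^2=y^Ty=(\Pi y)^T(\Pi y)=y^T\Pi\Pi y,\qquad y^TSy=(\Pi y)^TS(\Pi y)=y^T\Pi S\Pi y.\]
Thus the two quadratic forms of interest are realized as the quadratic forms of $\Pi\Pi$ and of $\Pi S\Pi$ evaluated at the \emph{same} vector $y$, which is exactly the pairing the hypothesis controls.

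It then remains to bound the difference. By the definition of the spectral norm and the assumption $\|\Pi S\Pi-\Pi\Pi\|_2\le\epsilon$,
\[\abs{x^T\tilde Lx-x^TLx}=\abs{y^T(\Pi S\Pi-\Pi\Pi)y}\le\|\Pi S\Pi-\Pi\Pi\|_2\,\|y\|_2^2\le\epsilon\,\|y\|_2^2=\epsilon\,x^TLx.\]
Since $x^TLx\ge0$, this rearranges to $(1-\epsilon)x^TLx\le x^T\tilde Lx\le(1+\epsilon)x^TLx$, which is the claim. (No positivity subtlety arises: $S$ is diagonal and nonnegative, so $\tilde L=B^TW^{1/2}SW^{1/2}B$ is a genuine Laplacian and all three forms are nonnegative.)

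There is no real obstacle here; the only point that requires care is recognizing that the $m$-dimensional hypothesis need only be invoked on vectors lying in $\im(\Pi)$, and that $W^{1/2}Bx$ always lands there — this is precisely what legitimizes sandwiching $S$ between two copies of $\Pi$ before applying the spectral-norm bound. Everything else is the one-line estimate above.
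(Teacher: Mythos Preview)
Your proof is correct and follows essentially the same approach as the paper: set $y=W^{1/2}Bx$, use Lemma~\ref{lempi}(ii) to get $\Pi y=y$, and then apply the spectral-norm hypothesis to bound $|y^T(\Pi S\Pi-\Pi\Pi)y|$ by $\epsilon\|y\|^2=\epsilon\,x^TLx$. The paper phrases the same computation as a supremum over $y\in\im(W^{1/2}B)$ and treats the kernel case $W^{1/2}Bx=0$ separately, but your inequality $|x^T\tilde Lx-x^TLx|\le\epsilon\,x^TLx$ already covers that case trivially.
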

\begin{proof}
The assumption is equivalent to
\[\sup_{y\in \R^m, y\neq 0}\frac{|y^T\Pi (S - I)\Pi y|}{y^Ty} \le \epsilon\]
since $\|A\|_2=\sup_{y\neq 0} |y^TAy|/y^Ty$ for symmetric $A$. Restricting our
attention to vectors in $\im(W^{1/2}B)$, we have
\[\sup_{y\in \im(W^{1/2}B), y\neq 0} \frac{|y^T\Pi(S-I)\Pi y|}{y^Ty} \le \epsilon.\]
But by Lemma \ref{lempi}.(ii), $\Pi$ is the identity on $\im(W^{1/2}B)$ so $\Pi
y=y$ for all $y\in\im(W^{1/2}B)$. Also, every such $y$ can be written as $y=W^{1/2}Bx$ for
$x\in\R^n$. Substituting this into the above expression we obtain:
\begin{align*}
\lefteqn{\sup_{y\in \im(W^{1/2}B), y\neq 0} \frac{|y^T\Pi(S-I)\Pi y|}{y^Ty}}\\
 &= \sup_{y\in
\im(W^{1/2}B),y\neq 0} \frac{|y^T(S-I)y|}{y^Ty}
\\ &=\sup_{x\in\R^n, W^{1/2}Bx\neq 0}\frac{|x^TB^TW^{1/2}SW^{1/2}Bx-x^TB^TWBx|}{x^TB^TWBx}
\\&=\sup_{x\in\R^n, W^{1/2}Bx\neq
0}\frac{|x^T\tilde{L}x-x^TLx|}{x^TLx}\le\epsilon. \end{align*} 
Rearranging yields the desired conclusion for all $x\notin\ker(W^{1/2}B)$.
When $x\in\ker(W^{1/2}B)$ then $x^TLx=x^T\tilde{L}x=0$ and the claim holds trivially.\end{proof}

To show that $\|\Pi S\Pi-\Pi\Pi\|_2$ is likely to be small we use the following concentration
result, which is a sort of law of large numbers for symmetric rank 1 matrices. It was first proven
by Rudelson in \cite{rudl}, but the version we state here appears in the more
recent paper \cite{rudversh} by Rudelson and Vershynin.  
\begin{lem}[Rudelson \& Vershynin, \cite{rudversh} Thm. 3.1] 
\label{lemrud} Let $\mathbf{p}$ be a probability distribution over
$\Omega\subseteq\R^d$ such
that $\sup_{y\in\Omega}\|y\|_2\le M$ and $\|\E_{\mathbf{p}} yy^T\|_2\le 1$. Let
$y_1\ldots y_q$ be independent samples drawn from $\mathbf{p}$. Then 
\[\E \left\|\frac{1}{q}\sum_{i=1}^q y_iy_i^T-\E yy^T\right\|_2\le
\min\left(CM\sqrt{\frac{\log q}{q}},1\right)\]
where $C$ is an  absolute constant.
\end{lem}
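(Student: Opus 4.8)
\medskip
\noindent\emph{Proof strategy.}
The plan is to follow the classical three-step route for such ``matrix laws of large numbers'': symmetrize, bound a Rademacher average of rank-one matrices via a non-commutative Khintchine inequality, and then bootstrap the resulting self-referential inequality. Throughout, write $X_i = y_iy_i^T$ and let $E = \E\norm{\frac1q\sum_{i=1}^q X_i - \E yy^T}_2$ denote the quantity to be controlled. The $\min(\cdot,1)$ in the statement is a convenient but non-essential addition, so I would aim directly at the bound $E \le CM\sqrt{\log q/q}$, which is the substance.

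First I would symmetrize: introducing independent Rademacher signs $\varepsilon_1,\dots,\varepsilon_q$ independent of the $y_i$, the standard symmetrization inequality for sums of i.i.d.\ matrix-valued summands gives
\[ E \;\le\; 2\,\E\norm{\tfrac1q\sum_i\varepsilon_i\,y_iy_i^T}_2 . \]
The heart of the matter --- and the step I expect to be the main obstacle --- is to estimate this Rademacher average for fixed $y_1,\dots,y_q$. Since $\sum_i\varepsilon_i\,y_iy_i^T$ is supported on $\mathrm{span}(y_1,\dots,y_q)$, a subspace of dimension at most $q$, a non-commutative Khintchine inequality (in the Lust-Piquard--Pisier form) applies with ``effective dimension'' $q$ and yields, for an absolute constant $C_1$,
\[ \E_\varepsilon\norm{\sum_i\varepsilon_i\,y_iy_i^T}_2 \;\le\; C_1\sqrt{\log q}\,\norm{\sum_i\|y_i\|_2^2\,y_iy_i^T}_2^{1/2} \;\le\; C_1 M\sqrt{\log q}\,\norm{\sum_i y_iy_i^T}_2^{1/2}, \]
where the last inequality uses $\|y_i\|_2 \le M$. (Equivalently this is Rudelson's original lemma, provable from scratch by an $\varepsilon$-net/entropy estimate on the unit sphere; either way this is where the $\sqrt{\log q}$ factor is produced, and obtaining exactly this dependence --- rather than a power of $q$ --- is the delicate point.)

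Finally I would take expectations over the $y_i$ and bootstrap. Jensen's inequality moves the square root outside, giving $\E\norm{\sum_i\varepsilon_i\,y_iy_i^T}_2 \le C_1 M\sqrt{\log q}\,\bigl(\E\norm{\sum_i y_iy_i^T}_2\bigr)^{1/2}$, and the triangle inequality together with the hypothesis $\|\E yy^T\|_2 \le 1$ bounds $\E\norm{\sum_i y_iy_i^T}_2 = q\,\E\norm{\frac1q\sum_i X_i}_2 \le q(1+E)$. Combining these with the symmetrization step yields the self-referential inequality $E \le a\sqrt{1+E}$ with $a = 2C_1 M\sqrt{\log q/q}$, i.e.\ $E^2 \le a^2(1+E)$; solving this quadratic gives $E \le \tfrac12\bigl(a^2 + a\sqrt{a^2+4}\bigr)$, which is at most $2a$ as soon as $a\le 1$ --- the regime relevant to the applications in this paper. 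Taking $C = 4C_1$ then gives $E \le CM\sqrt{\log q/q}$, as asserted.
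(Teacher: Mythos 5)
The paper does not supply a proof of Lemma~\ref{lemrud}: it is imported as a black box from Rudelson and Vershynin \cite{rudversh} (Theorem~3.1), which in turn rests on Rudelson's earlier work \cite{rudl}. So there is no in-paper argument to compare against. That said, your sketch is precisely the standard proof of this result and is the argument used in the cited sources: symmetrize, apply Rudelson's lemma (equivalently a noncommutative Khintchine inequality) to the resulting Rademacher chaos of rank-one matrices, and close the self-referential estimate by a bootstrap. The bookkeeping in your last step is correct: with $a = 2C_1 M\sqrt{\log q / q}$ one gets $E \le a\sqrt{1+E}$, whence $E \le \tfrac12\bigl(a^2 + a\sqrt{a^2+4}\,\bigr) \le 2a$ for $a \le 1$, and taking $C = 4C_1$ recovers the claimed rate.

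Two caveats, neither fatal. First, as you acknowledge, the Khintchine/Rudelson step is the genuine content and your sketch treats it as a black box; any complete writeup would have to supply that lemma (via Lust-Piquard--Pisier, or Rudelson's original entropy/majorizing-measure argument). Second, the $\min(\cdot,1)$ in the statement is not as disposable as you suggest: your bootstrap gives $E \le 2a$ only in the regime $a \le 1$, and for $a$ between roughly $1/2$ and $1$ this yields $E \le 2a$, which can exceed $1$, while for $a > 1$ the quadratic only gives $E \lesssim a^2$. So the ``$\le 1$'' half of the min does not follow from the chain of inequalities you wrote; it requires a separate argument. In this paper's application ($q = \Theta(n\log n/\epsilon^2)$, $M = \sqrt{n-1}$) one has $a \ll 1$, so the omission is harmless, but strictly speaking your sketch proves a slightly weaker statement than the one quoted.
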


\noindent We can now finish the proof of Theorem \ref{mainthm}.
\begin{proof}[Proof of Theorem \ref{mainthm}] 
$\mathbf{Sparsify}$ samples edges from $G$ independently with replacement,
with probabilities $p_e$ proportional to $w_eR_e$. Since 
$\sum_e w_eR_e=\textrm{Tr}(\Pi)=n-1$ by Lemma \ref{lempi}.(iii),
the actual probability distribution over $E$ is given by $p_e=\frac{w_eR_e}{n-1}$.
Sampling $q$ edges from $G$ corresponds to sampling $q$ columns from $\Pi$, so we can
write
\begin{align*}
\Pi S\Pi &= \sum_{e}S(e,e)\Pi(\cdot,e)\Pi(\cdot,e)^T
\\&=\sum_e \frac{(\#\textrm{ of times $e$ is
sampled})}{qp_e}\Pi(\cdot,e)\Pi(\cdot,e)^T\quad\textrm{by (\ref{defS})}
\\&=\frac{1}{q}\sum_{e} (\#\textrm{ of times $e$ is sampled})\frac{\Pi(\cdot,e)}{\sqrt{p_e}}\frac{\Pi(\cdot,e)^T}{\sqrt{p_e}}
\\&=\frac{1}{q}\sum_{i=1}^q y_iy_i^T\end{align*}
for vectors $y_1,\ldots, y_q$ drawn independently with replacement from the
distribution
\[ y = \frac{1}{\sqrt{p_e}}\Pi(\cdot,e) \quad\textrm{with probability } p_e.\] 
We can now apply Lemma \ref{lemrud}.  The expectation of $yy^T$ is given by
\[ \mathbb{E}yy^T = \sum_e p_e \frac{1}{p_e} \Pi(\cdot,e) \Pi(\cdot,e)^T =
\Pi\Pi = \Pi,\]
so $\|\mathbb{E}yy^T\|_2=\|\Pi\|_2=1$.
We also have a bound on the norm of $y$:
\[ \frac{1}{\sqrt{p_e}}\|\Pi(\cdot,e)\|_2 
= \frac{1}{\sqrt{p_e}}\sqrt{\Pi(e,e)}\\
= \sqrt{\frac{n-1}{R_ew_e}}\sqrt{R_ew_e}
= \sqrt{n-1}.\]
Taking $q=9C^2n\log n/\epsilon^2$ gives:
\[ \E \left\|\Pi S\Pi -\Pi\Pi\right\|_2 = \E \left\|\frac{1}{q}\sum_{i=1}^q
y_iy_i^T-\E yy^T\right\|_2\le
C\sqrt{\epsilon^2 \frac{\log (9C^2n\log n/\epsilon^2) (n-1)}{9C^2n\log n}} \le
\epsilon/2,\]
for $n$ sufficiently large, as $\epsilon$ is assumed to be at least $1/\sqrt{n}$.

By Markov's inequality, we have 
\[ \|\Pi S\Pi-\Pi\|_2\le\epsilon\]
with probability at least $1/2$. By Lemma \ref{lemS}, this completes the proof
of the theorem.
\end{proof}
We now show that using approximate resistances for sampling does not damage the 
  sparsifier very much.
\begin{cor} \label{approxok} Suppose $Z_e$ are numbers satisfying $Z_e\ge
R_e/\alpha$ and $\sum_e w_eZ_e\le \alpha \sum_e w_eR_e$ for some $\alpha\ge 1$. 
If we sample as in $\mathbf{Sparsify}$ but take each edge with probability
$p_e'=\frac{w_eZ_e}{\sum_e w_eZ_e}$ instead of $p_e=\frac{w_eR_e}{\sum_e
w_eR_e}$, then $H$ satisfies:
\[ (1-\epsilon\alpha)x^T\tilde{L}x\le x^TLx\le (1+\epsilon\alpha)x^T\tilde{L}x
\quad\forall x\in\R^n,\]
with probability at least $1/2$.
\end{cor}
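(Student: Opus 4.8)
The plan is to repeat the proof of Theorem \ref{mainthm} essentially verbatim, replacing the sampling distribution $p_e$ by $p_e'$ and tracking the single quantity that changes, namely the uniform bound $M$ on the vectors fed to Lemma \ref{lemrud}. As in that proof, encode the modified procedure by the nonnegative diagonal matrix $S$ with $S(e,e) = (\#\text{ of times }e\text{ is sampled})/(q p_e')$, so that $\tilde L = B^T W^{1/2} S W^{1/2} B$ and
\[ \Pi S \Pi \;=\; \frac{1}{q}\sum_{i=1}^q y_i y_i^T, \qquad y \;=\; \frac{1}{\sqrt{p_e'}}\,\Pi(\cdot,e)\ \text{ with probability } p_e'. \]
The key point is that the expectation is unaffected by the change of distribution: $\E\, yy^T = \sum_e p_e' \cdot \frac{1}{p_e'}\Pi(\cdot,e)\Pi(\cdot,e)^T = \Pi\Pi = \Pi$, since the $1/p_e'$ scaling cancels the sampling probability exactly, so $\|\E\, yy^T\|_2 = \|\Pi\|_2 = 1$ still holds.

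The only genuinely new computation is the norm bound, and this is exactly where the two hypotheses on $Z_e$ are used. By Lemma \ref{lempi}.(iv) and the fact that $\Pi(e,e) = w_e R_e$,
\[ \|y\|_2^2 \;=\; \frac{\Pi(e,e)}{p_e'} \;=\; \frac{w_e R_e}{w_e Z_e}\sum_f w_f Z_f \;=\; \frac{R_e}{Z_e}\left(\sum_f w_f Z_f\right) \;\le\; \alpha \cdot \alpha(n-1) \;=\; \alpha^2 (n-1), \]
using $Z_e \ge R_e/\alpha$ to bound $R_e/Z_e \le \alpha$, and $\sum_f w_f Z_f \le \alpha \sum_f w_f R_f = \alpha(n-1)$ for the other factor (recall $\sum_f w_f R_f = \mathrm{Tr}(\Pi) = n-1$ by Lemma \ref{lempi}.(iii)). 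Thus we may take $M = \alpha\sqrt{n-1}$ in Lemma \ref{lemrud}, a factor $\alpha$ larger than the $\sqrt{n-1}$ used for Theorem \ref{mainthm}.

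Now apply Lemma \ref{lemrud} with this $M$ and the unchanged choice $q = 9C^2 n\log n/\epsilon^2$; the resulting estimate is literally the one in the proof of Theorem \ref{mainthm} scaled by $\alpha$, so
\[ \E\,\|\Pi S\Pi - \Pi\Pi\|_2 \;\le\; C\alpha\sqrt{n-1}\,\sqrt{\tfrac{\log q}{q}} \;\le\; \alpha\epsilon/2 \]
for $n$ large enough, using $\epsilon \ge 1/\sqrt n$ as in Theorem \ref{mainthm}. Markov's inequality gives $\|\Pi S\Pi - \Pi\|_2 \le \alpha\epsilon$ with probability at least $1/2$, on which event Lemma \ref{lemS}, applied with $\alpha\epsilon$ in place of $\epsilon$, yields $(1-\alpha\epsilon)x^TLx\le x^T\tilde Lx\le(1+\alpha\epsilon)x^TLx$ for all $x$ — the claimed approximation. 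I do not anticipate a real obstacle: apart from the displayed norm bound, every line is an unchanged copy of the proof of Theorem \ref{mainthm}, and the only things worth checking carefully are (a) that $\E\, yy^T$ is still exactly $\Pi$ despite the reweighting and (b) that the two hypotheses on $Z_e$ are precisely calibrated so that overestimating each probability costs one factor of $\alpha$ and overestimating the normalization $\sum_e w_e Z_e$ costs a second, giving $M^2 \le \alpha^2(n-1)$ and hence the loss of only a single $\alpha$ in the final error.
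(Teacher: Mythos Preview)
Your proposal is correct and follows essentially the same approach as the paper: the paper notes $p_e' \ge p_e/\alpha^2$ (combining the two hypotheses on $Z_e$ exactly as you do) and concludes that the norm bound becomes $\alpha\sqrt{n-1}$, which introduces a single factor of $\alpha$ into the Rudelson--Vershynin estimate with everything else unchanged. Your explicit verification that $\E\,yy^T=\Pi$ is still exact under the new distribution is a worthwhile point the paper leaves implicit.
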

\begin{proof} We note that
\[ p_e'=\frac{w_eS_e}{\sum_e w_eS_e} \ge \frac{w_e(R_e/\alpha)}{\alpha \sum_e
w_eR_e} =
\frac{p_e}{\alpha^2}\] and proceed as in the proof of Theorem \ref{mainthm}. The
norm of the random vector $y$ is now bounded by:
\[ \frac{1}{\sqrt{p_e'}}\|\Pi(e,\cdot)\|_2 
\le  \frac{\alpha}{\sqrt{p_e}}\sqrt{\Pi(e,e)}\\
= \alpha\sqrt{n-1}\] which introduces a factor of $\alpha$ into the final bound
on the expectation, but changes nothing else.\end{proof}

\section{Computing Approximate Resistances Quickly}
It is not clear how to compute all the effective resistances $\{R_e\}$ exactly and
efficiently. In this section, we show that one can compute constant factor
approximations to all the $R_e$ in time $\Otilde(m \log r)$. In fact, we do
something stronger: we build a $O(\log n)\times n$  matrix $\Ztilde $ from which the effective resistance
between any two vertices (including vertices not connected by an edge) can be
computed in $O(\log n)$ time.

\begin{proof}[Proof of Theorem \ref{th2}] If $u$ and $v$ are vertices in $G$, then the effective resistance between
$u$ and $v$ can be written as:
\begin{align*}
R_{uv}&=(\chi_u-\chi_v)^TL^\cross (\chi_u-\chi_v)\\
&=(\chi_u-\chi_v)^TL^\cross L L^\cross (\chi_u-\chi_v)\\
&=((\chi_u-\chi_v)^TL^\cross B^TW^{1/2})(W^{1/2}B L^\cross (\chi_u-\chi_v))\\
&=\|W^{1/2}BL^\cross (\chi_u-\chi_v)^2\|_2^2.\end{align*}
Thus effective resistances are just pairwise distances between vectors in
$\{W^{1/2}BL^\cross \chi_v\}_{v\in V}$. By the Johnson-Lindenstrauss Lemma, these
distances are preserved if we project the vectors onto a subspace spanned by
$O(\log n)$ random vectors.
For concreteness, we use the following version of the Johnson-Lindenstrauss Lemma
  due to Achlioptas \cite{ach}.
\begin{lem} Given fixed vectors $v_1\ldots v_n\in \R^d$ and
$\epsilon>0$, let
$Q_{k\times d}$ be a random $\pm 1/\sqrt{k}$ matrix (i.e., independent Bernoulli entries)
with $k\ge 24\log n/\epsilon^2$. Then with probability at least $1-1/n$
\[(1-\epsilon)\|v_i-v_j\|_2^2\le \|Qv_i-Qv_j\|_2^2\le
(1+\epsilon)\|v_i-v_j\|_2^2\] for all pairs $i,j\le n$.
\end{lem}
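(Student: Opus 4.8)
The plan is to reduce the statement to a single‑vector tail bound and then union‑bound over pairs. Since $Q$ is a linear map, $Qv_i-Qv_j=Q(v_i-v_j)$, so it suffices to show that for every fixed nonzero $u\in\R^d$,
\[ \mathbb{P}\big[\,\abs{\norm{Qu}_2^2-\norm{u}_2^2}>\epsilon\,\norm{u}_2^2\,\big]\le n^{-3}, \]
and then apply this to each of the at most $\binom{n}{2}$ differences $u=v_i-v_j$; summing, the total failure probability is at most $\binom{n}{2}n^{-3}<1/n$. By homogeneity we may rescale and assume $\norm{u}_2=1$.

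Fix such a unit vector $u$, write the $(\ell,j)$ entry of $Q$ as $\sigma_{\ell j}/\sqrt{k}$ with the $\sigma_{\ell j}$ independent $\pm1$ variables, and set $X_\ell=\sum_{j=1}^d\sigma_{\ell j}u_j$. Then $\norm{Qu}_2^2=\tfrac1k\sum_{\ell=1}^k X_\ell^2$ is an average of $k$ independent copies (one per row of $Q$) of $X^2$, where $X=\sum_j\sigma_j u_j$ has $\E[X^2]=\sum_j u_j^2=1$; in particular $\E\,\norm{Qu}_2^2=1$. Thus this random variable has exactly the structure of $\tfrac1k\chi^2_k$, and the claim is that it concentrates like one. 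The one nonroutine ingredient is the chi‑square‑type moment generating function bound $\E\, e^{sX^2}\le(1-2s)^{-1/2}$ for $0\le s<\tfrac12$. To get it, take a standard Gaussian $g$ independent of the $\sigma_j$ and use the identity $e^{sx^2}=\E_g\,e^{\sqrt{2s}\,gx}$ (valid for $s\ge0$); then by Fubini and independence of the $\sigma_j$,
\[ \E\, e^{sX^2}=\E_g\prod_{j=1}^d\E_{\sigma_j}e^{\sqrt{2s}\,g\sigma_j u_j}=\E_g\prod_{j=1}^d\cosh(\sqrt{2s}\,g u_j)\le\E_g\exp\Big(s g^2\!\sum_{j=1}^d u_j^2\Big)=\E_g\, e^{sg^2}=(1-2s)^{-1/2}, \]
using $\cosh t\le e^{t^2/2}$ and $\norm{u}_2=1$. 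An analogous (and strictly lighter) bound controls $\E\, e^{-sX^2}$, so the lower tail of $\tfrac1k\sum_\ell X_\ell^2$ is no heavier than that of $\tfrac1k\chi^2_k$.

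With the moment generating function bound in hand the rest is the standard Chernoff computation. Using independence of the rows, for the upper tail
\[ \mathbb{P}\big[\norm{Qu}_2^2\ge 1+\epsilon\big]\le\inf_{0<s<1/2}e^{-sk(1+\epsilon)}(1-2s)^{-k/2}=\exp\Big(-\tfrac k2\big(\epsilon-\ln(1+\epsilon)\big)\Big)\le\exp\Big(-\tfrac k2\big(\tfrac{\epsilon^2}{2}-\tfrac{\epsilon^3}{3}\big)\Big), \]
and symmetrically $\mathbb{P}\big[\norm{Qu}_2^2\le 1-\epsilon\big]\le\exp(-\tfrac k2(\tfrac{\epsilon^2}{2}+\tfrac{\epsilon^3}{3}))$. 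For $0<\epsilon\le1$ and $k\ge 24\log n/\epsilon^2$ a direct check shows each tail is small enough that summing over both tails and the at most $\binom{n}{2}$ differences $v_i-v_j$ keeps the total failure probability below $1/n$; the constant $24$ is chosen precisely to absorb this union bound (near $\epsilon=1$ one keeps the sharp form $\epsilon-\ln(1+\epsilon)$ and uses that the lower inequality is then vacuous). This proves the lemma, and applying it to the vectors $\{W^{1/2}BL^\cross\chi_v\}_{v\in V}$ completes the proof of Theorem \ref{th2}.

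The only real obstacle I anticipate is the moment generating function estimate $\E\, e^{sX^2}\le(1-2s)^{-1/2}$ for the Rademacher quadratic form $X^2$ — the statement that a random signed combination of the coordinates of a unit vector is, for exponential‑moment purposes, no worse than a standard Gaussian; the linearization identity $e^{sx^2}=\E_g e^{\sqrt{2s}gx}$ is the clean route. A minor wrinkle is that this identity fails for $s<0$, so the lower‑tail moment bound needs a separate (Fourier‑inversion or even‑moment‑comparison) argument; but that, like the reduction by linearity, the Chernoff optimization, and the union bound, is routine bookkeeping.
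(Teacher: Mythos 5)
The paper does not actually prove this lemma: it is quoted verbatim and attributed to Achlioptas \cite{ach}, whose paper the reader is meant to consult. So there is no "paper's own proof" to compare against; what follows is an assessment of your argument on its merits.

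Your reduction to a single unit vector, the identification of $\|Qu\|_2^2$ with $\tfrac1k\sum_\ell X_\ell^2$ for independent Rademacher sums $X_\ell$, the Gaussian-linearization proof of $\E e^{sX^2}\le(1-2s)^{-1/2}$ for $0\le s<\tfrac12$, the Chernoff optimization, and the union bound are all correct and match the standard route to this lemma (and indeed the spirit of Achlioptas's proof). The upper-tail half is complete.

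The lower tail, however, contains a genuine gap, not the "minor wrinkle" you describe. You assert that "an analogous (and strictly lighter) bound controls $\E e^{-sX^2}$" and then use the \emph{Gaussian} lower-tail rate $\exp\bigl(-\tfrac{k}{2}(\tfrac{\epsilon^2}{2}+\tfrac{\epsilon^3}{3})\bigr)$, which would require $\E e^{-sX^2}\le(1+2s)^{-1/2}$ at $s=\tfrac{\epsilon}{2(1-\epsilon)}$. That inequality is \emph{false} in general: take $u=(1/\sqrt2,1/\sqrt2)$, so $X^2\in\{0,2\}$ with equal probability and $\E e^{-sX^2}=\tfrac12+\tfrac12 e^{-2s}$, which exceeds $(1+2s)^{-1/2}$ already at $s=2$ (that is, $\epsilon\approx 0.8$). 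Your suggested fixes do not close this: the linearization $e^{sx^2}=\E_g e^{\sqrt{2s}gx}$ simply has no $s<0$ analogue, and the Fourier-inversion route requires $\prod_j\cos(tu_j)\le e^{-t^2/2}$ pointwise, which fails because $\cos$ oscillates while the Gaussian decays. A correct repair is easy but different in kind: bound $\E e^{-sX^2}\le 1-s+\tfrac{s^2}{2}\E X^4\le 1-s+\tfrac{3s^2}{2}\le e^{-s+3s^2/2}$ using $\E X^4\le 3$ (Khintchine), optimize at $s=\epsilon/3$ to get $\P[\|Qu\|^2\le 1-\epsilon]\le e^{-k\epsilon^2/6}$, and check that with $k\ge 24\log n/\epsilon^2$ this is $\le n^{-4}$, which together with the upper-tail bound still survives the union over $\binom{n}{2}$ pairs. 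With that substitution your proof is correct, but as written the lower-tail step rests on an inequality that does not hold.
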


Our goal is now to compute the projections
$\{QW^{1/2}BL^\cross\chi_v\}$.
We will exploit the linear system solver of Spielman and Teng \cite{st04,st06},
which we recall satisfies:

\begin{thm}[Spielman-Teng]\label{st}
There is an algorithm $x = \mathtt{STSolve}(L,y,\delta)$ which
takes a Laplacian matrix $L$, 
a column vector $y$, and an error
parameter $\delta > 0$, and returns a column vector $x$ satisfying
\[
\|x - L^{\cross} y\|_{L} \leq \epsilon \|L^{\cross} y\|_{L},
\] 
where $\norm{y}_{L} = \sqrt{y^{T} L y}$.
The algorithm runs in expected time $\Otilde \left(m \log(1/\delta) \right)$,
where $m$ is the number of non-zero entries in $L$.
\end{thm}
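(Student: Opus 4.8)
The plan is to prove Theorem~\ref{st} by combining a preconditioned iterative method with a recursively constructed chain of graph preconditioners, each sparser than the last. First I would recall the basic reduction: if $\tilde{L}$ is a Laplacian with $\ker\tilde{L}=\ker L=\mathrm{span}(\onevec)$ and $\tilde{L}\preceq L\preceq\kappa\tilde{L}$ on $\mathrm{span}(\onevec)^{\perp}$, then preconditioned Chebyshev iteration (or preconditioned conjugate gradients) produces an $x$ with $\norm{x-L^{\cross}y}_{L}\le\delta\norm{L^{\cross}y}_{L}$ after $O(\sqrt{\kappa}\log(1/\delta))$ iterations, each of which costs one multiplication by $L$ (time $O(m)$), a constant number of vector operations, and one application of $\tilde{L}^{\cross}$ to a vector. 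So the entire problem reduces to (a) building a preconditioner $\tilde{L}$ with small relative condition number $\kappa$, and (b) solving systems in $\tilde{L}$ quickly — and (b) will be handled recursively.

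Second, I would build the preconditioner in two stages. Stage one is an \emph{ultrasparsifier}: take a low-stretch spanning tree $T$ of $G$ (Alon--Karp--Peleg--West, improved by Elkin--Emek--Spielman--Teng, gives a tree of total stretch $\Otilde(m)$ computable in $\Otilde(m)$ time), then add back $O(m/t)$ off-tree edges, chosen by random sampling with probability proportional to stretch, to obtain a subgraph $H_{t}$ with $n-1+O(m/t)$ edges satisfying $L_{H_{t}}\preceq L\preceq\Otilde(t)\,L_{H_{t}}$ with high probability. Stage two is \emph{greedy elimination}: repeatedly eliminate degree-one and degree-two vertices from $H_{t}$ — this is exact partial Cholesky and so does not change the relative condition number — leaving a Laplacian $A_{1}$ on $O(m/t)$ vertices and $O(m/t)$ edges, together with the bookkeeping needed to recover $L_{H_{t}}^{\cross}v$ from a solve against $A_{1}$.

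Third, I would iterate the construction on $A_{1}$ to get $A_{2}$, and so on, producing a \emph{preconditioner chain} $L=A_{0},A_{1},\dots,A_{d}$ in which the edge count of $A_{i}$ decreases geometrically in $i$ (by choosing the off-tree parameter $t$ to be a suitable power of $\log n$ at each level), so the chain has depth $d=O(\log n)$ and $A_{d}$ is small enough to solve by direct factorization. To solve $A_{0}x=y$ one runs the preconditioned iteration at level $0$; each of its iterations invokes a solve against the level-$1$ preconditioner, which is itself performed by preconditioned iteration against level $2$, and so on down the chain. At level $i$ the recursion performs roughly $\prod_{j\le i}O\!\left(\sqrt{\kappa_{j}}\,\log(1/\delta_{j})\right)$ applications of an $O(|E(A_{i})|)$-time operation; one chooses the per-level condition targets $\kappa_{j}$ (equivalently the $t_{j}$) and inner accuracies $\delta_{j}$ so that this product, summed over levels, is $\Otilde(m\log(1/\delta))$ — the geometric decrease in $|E(A_{i})|$ is exactly what makes the sum converge, and the $\log(1/\delta)$ factor enters only through the iteration count at the top level.

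The main obstacle is not the iterative-method analysis, which is classical, but the two graph-theoretic ingredients and their quantitative balancing: constructing low-stretch spanning trees in $\Otilde(m)$ time and proving the stretch bound (the technical heart, via an AKPW/EEST-style hierarchical decomposition of $G$), and then proving that sampling off-tree edges with probability proportional to stretch yields, with good probability, an ultrasparsifier with the claimed edge count and relative condition number — essentially a matrix-Chernoff / Rudelson-type argument of the same flavor as Lemma~\ref{lemrud} in this paper. Once these are in hand the remaining work is verification: that greedy elimination preserves the spectral approximation and is computable in $\Otilde(1)$ time per eliminated vertex, that the level parameters can be chosen so that $\sum_{i}|E(A_{i})|\prod_{j\le i}\sqrt{\kappa_{j}}$ is nearly linear, and that composing the inexact solves does not blow up the error (controlled by taking $\delta_{j}$ slightly smaller at inner levels and absorbing the loss into the outer Chebyshev iteration).
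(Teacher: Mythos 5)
The paper does not prove Theorem~\ref{st}; it imports it verbatim as a black box from Spielman and Teng \cite{st04,st06}, and everything downstream (the construction of $\Ztilde$ in Theorem~\ref{th2} and the running-time claims) depends only on the statement — the $\Otilde(m\log(1/\delta))$ running time and the $L$-norm error guarantee — not on any internals of the solver. So there is no ``paper's own proof'' to compare yours against; the honest answer is that you were asked to reconstruct a cited theorem, not a proved one.

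Taken on its own terms, your sketch is a reasonable high-level account of how such solvers are built: reduce to building a chain of preconditioners via preconditioned Chebyshev/CG with $O(\sqrt{\kappa}\log(1/\delta))$ convergence, construct each preconditioner from a low-stretch spanning tree plus sampled off-tree edges, contract away degree-one and degree-two vertices by exact partial Cholesky, and tune the per-level parameters so that the work at all levels telescopes to $\Otilde(m\log(1/\delta))$. You correctly flag the two genuine technical bottlenecks (low-stretch trees and the concentration bound for the off-tree sample) and the error-propagation issue for inexact inner solves. Two caveats are worth noting. First, what you have written is a plan, not a proof: you have not constructed the low-stretch tree, not stated or proved the concentration lemma you invoke, not fixed the level parameters $t_i,\kappa_i,\delta_i$, and not carried out the inexact-solve composition argument, which is where much of the actual difficulty in \cite{st04,st06} lives. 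Second, the step ``sample off-tree edges with probability proportional to stretch and invoke a Rudelson-type bound'' is closer to the streamlined later solvers (Koutis--Miller--Peng) than to the construction actually proved in the cited \cite{st04,st06}, which builds ultrasparsifiers through their own recursive spectral sparsification rather than one-shot stretch sampling; the framework matches but the key lemma does not. There is also a small circularity hazard to be aware of if one tried to prove the cited theorem using the machinery of \emph{this} paper, since this paper's sampling argument itself calls $\mathtt{STSolve}$; the cited works avoid this by using their own, weaker sparsifiers. Finally, a typographical note on the statement itself: the bound should read $\delta \|L^{\cross}y\|_{L}$, not $\epsilon\|L^{\cross}y\|_{L}$, since $\delta$ is the error parameter passed to $\mathtt{STSolve}$.
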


Let $Z = QW^{1/2}BL^\cross$.
We will compute an approximation $\Ztilde$ by using \texttt{STSolve} to approximately compute the
  rows of $Z$.
Let the column vectors $z_{i}$ and $\tilde{z_i}$ denote the $i$th rows of $Z$
  and $\tilde{Z}$, respectively (so that $z_i$ is the $i$th column of $Z^T$).
Now we can construct the matrix $\Ztilde$ in the following three steps.
\begin{enumerate}
\item Let $Q$ be a random $\pm 1/\sqrt{k}$ matrix of dimension $k\times n$ where
$k=24\log  n/\epsilon^2$. 
\item Compute $Y=QW^{1/2}B$. Note that this takes $2m\times 24\log
n/\epsilon^2+m=\Otilde(m/\epsilon^2)$
time since $B$ has $2m$ entries and $W^{1/2}$ is diagonal.
\item 
  Let  $y_i$, for $1\le i\le k$, denote the rows of $Y$,
  and compute $\ztilde_{i} = \mathtt{STSolve}(L,y_{i},\delta)$ 
  for each $i$.
\end{enumerate}

We now prove that, for our purposes, it suffices to call \texttt{STSolve}
  with
\[
  \delta = \frac{\epsilon }{3}
            \sqrt{\frac{2 (1-\epsilon) \wmin}{(1+\epsilon) n^{3}\wmx}}.
\]

\begin{lem}\label{lem:error2} 
 Suppose  
\[
(1-\epsilon) R_{uv} \leq 
\norm{Z (\chi_{u} - \chi_{v})}^{2}
\leq 
(1+\epsilon) R_{uv},
\]
for every pair $u,v\in V$.
If for all $i$,
\begin{equation}\label{stapprox} \|z_i-\tilde{z}_i\|_L\le\delta
\|z_{i}\|_L,\end{equation}
where
\begin{equation}\label{eqn:stprecision}
  \delta \leq  \frac{\epsilon }{3}
            \sqrt{\frac{2 (1-\epsilon) \wmin}{(1+\epsilon) n^{3}\wmx}}
\end{equation}
then 
\[
(1-\epsilon)^{2} R_{uv} 
\leq 
\|\Ztilde  (\chi_{u} - \chi_{v})\|^{2}
\leq 
(1 + \epsilon)^{2} R_{uv},
\]
 for every $uv$.
\end{lem}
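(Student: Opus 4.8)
The plan is to control the error between $\|\Ztilde(\chi_u-\chi_v)\|^2$ and $\|Z(\chi_u-\chi_v)\|^2$ entrywise through the rows, then chain with the given hypothesis on $\|Z(\chi_u-\chi_v)\|^2$. Write $x = \chi_u-\chi_v$. Then $\|Zx\|^2 = \sum_i (z_i^Tx)^2$ and $\|\Ztilde x\|^2 = \sum_i (\tilde z_i^Tx)^2$, so it suffices to bound $|z_i^Tx - \tilde z_i^Tx| = |(z_i-\tilde z_i)^Tx|$ and then argue that a small perturbation of each coordinate of the vector $(Zx)$ produces a small perturbation of its squared norm. The natural way to get a handle on $|(z_i-\tilde z_i)^Tx|$ from the hypothesis \eqref{stapprox}, which is stated in the $L$-norm, is to note that $z_i - \tilde z_i$ lies in $\im(L^\cross) = \mathrm{span}(\onevec)^\perp$ (true for $z_i$ since it is a row of $QW^{1/2}BL^\cross$, and we may assume \texttt{STSolve} returns $\tilde z_i$ in the same space, or project), and that $x = \chi_u - \chi_v \perp \onevec$ as well. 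On that subspace, $L^\cross$ is the genuine inverse of $L$, and we can convert between the $L$-norm, the Euclidean norm, and the $L^\cross$-norm using the extreme nonzero eigenvalues of $L$.

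First I would record the eigenvalue bounds: the smallest nonzero eigenvalue $\lambda_{\min}$ of $L$ satisfies $\lambda_{\min} \ge$ (something like) $\wmin/n^2$ or better — more carefully, one uses crude bounds such as $\lambda_{\min} \ge 1/(n^2 R_{\max} )$ type estimates, and $\|x\|_{L^\cross}^2 = x^TL^\cross x \le \|x\|^2/\lambda_{\min}$, together with $\|Zx\| \le \sqrt{\lambda_{\max}}\cdot(\text{stuff})$. Concretely, by Cauchy–Schwarz in the $L$-inner product, $|(z_i - \tilde z_i)^Tx| \le \|z_i - \tilde z_i\|_L \, \|x\|_{L^\cross} \le \delta \|z_i\|_L \|x\|_{L^\cross}$. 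Then I would bound $\|z_i\|_L$: since $z_i^T = q_i^T W^{1/2} B L^\cross$ where $q_i$ is a row of $Q$ with $\|q_i\|^2 = n/k$... actually the cleanest route is $\|z_i\|_L^2 = z_i^T L z_i = q_i^T W^{1/2} B L^\cross B^T W^{1/2} q_i = q_i^T \Pi q_i \le \|q_i\|^2 \le n \cdot (1/k) = n/k \le 1$ since $\Pi$ is a projection (Lemma \ref{lempi}). And $\|x\|_{L^\cross}^2 = R_{uv} \le 1/\lambda_{\min} \le n\wmx/\wmin$ or a similar crude bound. Multiplying through, $|(z_i-\tilde z_i)^Tx| \le \delta \cdot \sqrt{n\wmx/\wmin}$ roughly, and summing the $k$ coordinates in quadrature, $\|(Z-\Ztilde)x\| \le \sqrt{k}\cdot\delta\cdot(\cdots)$; plugging in the stated $\delta$ makes this at most, say, $\frac{\epsilon}{3}\sqrt{(1-\epsilon)R_{uv}/(1+\epsilon)} \cdot(\text{const})$, i.e.\ a small multiple of $\sqrt{(1-\epsilon)R_{uv}} \le \|Zx\|$.

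Then I would finish with the triangle inequality for the Euclidean norm: $\big|\,\|\Ztilde x\| - \|Zx\|\,\big| \le \|(Z-\Ztilde)x\|$, so $\|\Ztilde x\| \le \|Zx\| + \|(Z-\Ztilde)x\|$ and likewise for the lower bound; squaring and using $\sqrt{(1-\epsilon)R_{uv}} \le \|Zx\| \le \sqrt{(1+\epsilon)R_{uv}}$ from the hypothesis, and that $\|(Z-\Ztilde)x\|$ is at most $\frac{\epsilon}{\text{const}}\sqrt{(1-\epsilon)R_{uv}}$, gives $(1-\epsilon)^2 R_{uv} \le \|\Ztilde x\|^2 \le (1+\epsilon)^2 R_{uv}$ after absorbing the cross term — the constant $3$ in $\delta$ and the exact power $n^{3}$ are chosen precisely so the arithmetic $(\sqrt{1-\epsilon} - \epsilon/3\cdot\text{something})^2 \ge (1-\epsilon)^2$ works out.

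I expect the main obstacle to be getting the eigenvalue bound $\lambda_{\min}(L) \ge \Omega(\wmin/n^2)$ (or whatever crude bound justifies the $n^3$ and $\wmax/\wmin$ appearing in $\delta$) stated cleanly, and making sure $\tilde z_i$ can legitimately be taken orthogonal to $\onevec$ so that the $L^\cross$ is a true inverse on the relevant subspace and the $L$-norm Cauchy–Schwarz step is valid; the rest is bookkeeping with the triangle inequality and checking the constants. I would also double-check that $\|z_i\|_L \le 1$ via the projection property rather than a looser bound, since that is what keeps $\delta$ from needing extra factors.
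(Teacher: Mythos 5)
Your approach is correct in its key ideas, and it is genuinely different from (and, once cleaned up, actually stronger than) the paper's argument. The paper bounds $\norm{(Z-\Ztilde)(\chi_u-\chi_v)}$ by writing $\chi_u-\chi_v$ as a telescoping sum over a simple path $P$, applying Cauchy--Schwarz over the at most $n$ edges of $P$, enlarging the path sum to a sum over all edges, recognizing a Frobenius norm $\norm{W^{1/2}B(Z-\Ztilde)^T}_F^2$, applying the \texttt{STSolve} guarantee row by row, rewriting $\norm{W^{1/2}BZ^T}_F^2$ as a weighted sum of approximate resistances bounded by $(1+\epsilon)(n-1)$, and finally invoking Proposition~\ref{pro:resistances} to lower-bound $R_{uv}$. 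You instead work row by row \emph{directly in the dual norm pair $(\norm{\cdot}_L, \norm{\cdot}_{L^\cross})$}: since $z_i - \ztilde_i$ and $\chi_u-\chi_v$ both lie in $\onevec^\perp$, you have $|(z_i-\ztilde_i)^T(\chi_u-\chi_v)| \le \norm{z_i-\ztilde_i}_L \norm{\chi_u-\chi_v}_{L^\cross} \le \delta\norm{z_i}_L\sqrt{R_{uv}}$, and the projection identity $\norm{z_i}_L^2 = q_i^T\Pi q_i \le \norm{q_i}^2$ is the clean observation that closes the loop. This is slicker, avoids the path decomposition entirely, and, if carried through exactly, shows the conclusion holds already with $\delta \lesssim \epsilon\sqrt{(1-\epsilon)/n}$, with no dependence on $\wmx/\wmin$ at all --- a genuinely better sufficient condition than \eqref{eqn:stprecision}.

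Two small corrections to your writeup, neither fatal. First, $\norm{q_i}^2 = n/k = n\epsilon^2/(24\log n)$, which is \emph{not} at most $1$; your parenthetical ``$n/k\le 1$'' is false for large $n$. But you don't need it: keep the bound $\norm{z_i}_L^2 \le n/k$ and sum the $k$ rows to get $\norm{(Z-\Ztilde)(\chi_u-\chi_v)}^2 \le \delta^2\, k\cdot(n/k)\cdot R_{uv} = \delta^2 n R_{uv}$, which with the stated $\delta$ (and a fortiori with the weaker $\delta\le\frac{\epsilon}{3}\sqrt{(1-\epsilon)/n}$) is at most $\frac{\epsilon^2}{9}(1-\epsilon)R_{uv} \le \frac{\epsilon^2}{9}\norm{Z(\chi_u-\chi_v)}^2$. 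Second, you should resist the urge to replace $\norm{\chi_u-\chi_v}_{L^\cross}^2 = R_{uv}$ by a crude upper bound $n\wmx/\wmin$: doing so forces you to then lower-bound $R_{uv}$ again (Proposition~\ref{pro:resistances}) to compare against $\norm{Z(\chi_u-\chi_v)}$, and the ratio of those two crude bounds eats up more of $\delta$ than \eqref{eqn:stprecision} actually provides. Keeping $R_{uv}$ exact on both sides lets it cancel. Finally, your remark that $\ztilde_i$ can be projected onto $\onevec^\perp$ without loss is correct and worth including, since the \texttt{STSolve} guarantee is stated in the $L$-seminorm, and the projection changes neither $\norm{\ztilde_i - z_i}_L$ nor $\ztilde_i^T(\chi_u-\chi_v)$ --- exactly what is needed for the $L$/$L^\cross$ Cauchy--Schwarz step to be legitimate.
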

\begin{proof}
Consider an arbitrary pair of vertices $u$, $v$.
It suffices to show that 
\begin{equation}\label{eqn:eps3}
\abs{\norm{Z(\chi_u-\chi_v)}-\|\tilde{Z}(\chi_u-\chi_v)\|} \le
  \frac{\epsilon}{3}\norm{Z(\chi_u-\chi_v)}\end{equation}
since this will imply
\begin{align*} \abs{\norm{Z(\chi_u-\chi_v)}^2-\|{\tilde{Z}(\chi_u-\chi_v)}\|^2} 
  &= 
  \abs{\norm{Z(\chi_u-\chi_v)}-\|\tilde{Z}(\chi_u-\chi_v)\|}\cdot
  \abs{\norm{Z(\chi_u-\chi_v)}+\|\tilde{Z}(\chi_u-\chi_v)\|}
  \\&\le
  \frac{\epsilon}{3}\cdot\left(2+\frac{\epsilon}{3}\right)\norm{Z(\chi_u-\chi_v)}^2.
  \end{align*}

As $G$ is connected, there is a simple path $P$ connecting $u$ to $v$.
Applying the triangle inequality twice, we obtain
\begin{align*}
 \abs{ \norm{Z (\chi_{u} - \chi_{v})} -  \norm{\Ztilde  (\chi_{u} - \chi_{v})}}
& \leq 
   \norm{(Z - \Ztilde ) (\chi_{u} - \chi_{v})} \\
& \leq 
  \sum_{ab \in P} \norm{(Z - \Ztilde ) (\chi_{a} - \chi_{b})}.
\end{align*}
We will upper bound this later term by considering its square:
\begin{align*}
\left( \sum_{ab \in P} \norm{(Z - \Ztilde ) (\chi_{a} - \chi_{b})} \right)^{2}
& \leq 
n \sum_{ab \in P} \norm{(Z - \Ztilde ) (\chi_{a} - \chi_{b})}^{2}
\qquad \text{by Cauchy-Schwarz}
\\
& \leq 
n \sum_{ab \in E} \norm{(Z - \Ztilde ) (\chi_{a} - \chi_{b})}^{2} 
\\
& =
n \norm{(Z - \Ztilde ) B^{T}}_{F}^{2}
\qquad \text{writing this as a Frobenius norm}
\\ & = n \norm{B(Z - \Ztilde )^{T}}_{F}^{2}
\\
& \leq 
\frac{n}{w_{min}} \norm{W^{1/2} B(Z - \Ztilde )^{T}}_{F}^{2}
\qquad \text{since $\|W^{-1/2}\|_2\le 1/\sqrt{w_{min}}$}
\\
& \leq 
\delta^{2} \frac{n}{w_{min}} \norm{W^{1/2} B Z^{T}}_{F}^{2}
\\&\qquad\qquad \text{since $\|W^{1/2}B(z_i-\tilde{z}_i)\|^2\le \delta^2 \|W^{1/2}Bz_i\|^2$ by
(\ref{stapprox})}
\\
& = 
\delta^{2} \frac{n}{w_{min}} 
  \sum_{ab \in E} w_{ab} \norm{Z (\chi_{a} - \chi_{b})}^{2}
\\
& \leq 
\delta^{2} \frac{n}{w_{min}} 
  \sum_{ab \in E} w_{ab} (1 + \epsilon) R_{ab}
\\
& \leq 
\delta^{2} \frac{n (1+\epsilon )}{w_{min}} 
  (n-1)
\qquad \text{ by Lemma~\ref{lempi}.(iii).}
\end{align*}

On the other hand,
\[
  \norm{Z (\chi_{u} - \chi_{v})}^{2}
\geq 
  (1 - \epsilon) R_{uv}
\geq 
  \frac{2 (1-\epsilon)}{n w_{max}},
\]
by Proposition~\ref{pro:resistances}.
Combining these bounds, we have
\begin{align*}
\frac{
\abs{ \norm{Z (\chi_{u} - \chi_{v})} -  \norm{\Ztilde  (\chi_{u} - \chi_{v})}}
}{
  \norm{Z (\chi_{u} - \chi_{v})}
}
&\le 
\delta \left({\frac{n (1+\epsilon )}{w_{min}} (n-1)}\right)^{1/2}\cdot\left(\frac{n
w_{max}}{2(1-\epsilon)}\right)^{1/2}
\\&\le \frac{\epsilon}{3}\qquad\textrm{by (\ref{eqn:stprecision}),}
\end{align*}
as desired.
\end{proof}

\begin{pro}\label{pro:resistances}
If $G = (V,E,w)$ is a connected graph, then for all $u,v \in V$,
\[
   R_{uv} \geq   \frac{2}{n w_{max}}.
\]
\end{pro}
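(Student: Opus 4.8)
The plan is to use the standard variational (dual) description of effective resistance together with a single well-chosen test potential. Recall from Section~2.3 that $R_{uv}=(\chi_u-\chi_v)^T L^{\cross}(\chi_u-\chi_v)$. Writing $L=\sum_{i=1}^{n-1}\lambda_i u_iu_i^T$ as in Section~2.2, expanding $\chi_u-\chi_v$ (which lies in $\onevec^\perp$) in the basis $\{u_i\}$, and applying Cauchy--Schwarz — this is exactly Dirichlet's principle — one gets
\[
R_{uv}=\max_{x\in\R^n,\ x^T L x\neq 0}\ \frac{\bigl(x(u)-x(v)\bigr)^2}{x^T L x}.
\]
So it suffices to exhibit one vector $x$ whose Rayleigh-type quotient is at least $2/(n\,w_{max})$.

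The vector I would use is $x=\chi_u-\chi_v$, for which the numerator is exactly $4$. For the denominator, write $x^T L x=\sum_{ab\in E}w_{ab}\bigl(x(a)-x(b)\bigr)^2$ and note that the only edges contributing are those incident to $u$ or to $v$: the edge $uv$, if present, contributes $4w_{uv}\le 4w_{max}$, while each of the at most $n-2$ other edges at $u$ and each of the at most $n-2$ other edges at $v$ contributes at most $w_{max}$. Hence $x^T L x\le 4w_{max}+2(n-2)w_{max}=2n\,w_{max}$, which gives $R_{uv}\ge 4/(2n\,w_{max})=2/(n\,w_{max})$. If $uv\notin E$ one instead has $x^T L x\le 2(n-1)w_{max}$, which only strengthens the bound, so the claimed inequality holds in every case.

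An alternative, equally short route bypasses the variational formula and uses Rayleigh monotonicity. The graph $G$ is obtained from the complete graph on $V$ in which every edge has weight $w_{max}$ by lowering some edge weights (to $w_e$, or to $0$ for the missing edges), and decreasing a conductance can only increase effective resistances. In the complete graph on $n$ vertices with uniform conductance $c$, the Laplacian is $c(nI-J)$, whose pseudoinverse acts as $\tfrac{1}{cn}I$ on $\onevec^\perp$, so every effective resistance equals $\tfrac{1}{cn}\|\chi_u-\chi_v\|^2=2/(cn)$. Taking $c=w_{max}$ yields $R_{uv}\ge 2/(n\,w_{max})$.

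I do not expect a genuine obstacle here; the only point needing care is getting the constant $2$ rather than $1$ in the numerator — the asymmetric test potential $\chi_u$ alone would only give $R_{uv}\ge 1/((n-1)w_{max})$ — together with checking the two cases (whether or not $uv$ is itself an edge) to confirm that the right denominator is $n$ and not $n-1$.
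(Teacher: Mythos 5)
Both of your arguments are correct, and your ``alternative'' route is in fact the paper's own proof: Rayleigh monotonicity against $w_{max}\cdot K_n$, then compute the effective resistance in the weighted complete graph. The only cosmetic difference is that the paper gets the value $2/(n\,w_{max})$ by combining symmetry with the identity $\sum_e w_e R_e = \operatorname{Tr}(\Pi)=n-1$ from Lemma~\ref{lempi}.(iii), whereas you compute it by inverting $c(nI-J)$ directly on $\onevec^\perp$; these are interchangeable.

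Your primary argument is a genuinely different and equally valid route. It uses the dual (Dirichlet/Cauchy--Schwarz) characterization
\[
R_{uv}=\max_{x:\,x^TLx\neq 0}\frac{(x(u)-x(v))^2}{x^TLx},
\]
and plugs in the explicit test potential $x=\chi_u-\chi_v$: the numerator is $4$, and the denominator $x^TLx=\sum_{ab\in E}w_{ab}(x(a)-x(b))^2$ picks up only edges touching $u$ or $v$, giving $x^TLx\le 4w_{max}+2(n-2)w_{max}=2n\,w_{max}$. This avoids both Rayleigh monotonicity and the trace identity, at the cost of having to state (or prove via Cauchy--Schwarz in the $L$-seminorm) the variational formula, which the paper does not otherwise need. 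A small nit: in your $uv\notin E$ case the bound should be $2(n-2)w_{max}$ rather than $2(n-1)w_{max}$, though either suffices. The paper's route has the advantage of reusing machinery already established (Lemma~\ref{lempi} and the standard monotonicity law cited to~\cite{bollobas}), while yours is more self-contained and elementary.
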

\begin{proof}
By Rayleigh's monotonicity law (see~\cite{bollobas}), each resistance $R_{uv}$ in $G$
	is at least the corresponding resistance $R_{uv}'$ in $G'=\wmx \times
	K_n$ (the complete graph with all edge weights $\wmx$) since $G'$ is obtained 
	by increasing weights (i.e., conductances) of edges in $G$. 
But by symmetry each resistance $R_{uv}'$ in $G'$ is exactly
	\[\frac{\sum_{uv}R_{uv}'}{\binom{n}{2}} =
	\frac{(n-1)/\wmx}{n(n-1)/2}=\frac{2}{n\wmx}.\]
Thus $R_{uv}\ge \frac{2}{n\wmx}$ for all $u,v\in V$.
\end{proof}

Thus the construction of $\Ztilde $ takes $\Otilde(m\log(1/\delta)/\epsilon^2)=\Otilde(m \log r/\epsilon^2)$ time. We can then find the approximate resistance
$\| \Ztilde (\chi_u-\chi_v)\|^2\approx R_{uv}$ for any $u,v\in V$ in $O(\log
n/\epsilon^2)$ time simply by subtracting
two columns of $\Ztilde $ and computing the norm of their difference.
\end{proof}

Using the above procedure, we can compute arbitrarily good approximations to the effective resistances
$\{R_e\}$ which we need for sampling in nearly-linear time. By Corollary~\ref{approxok}, any constant factor approximation 
yields a sparsifier, so we are done.

\section{An Additional Property}
Corollary \ref{approxok} suggests that $\mathbf{Sparsify}$ is quite robust with
respect to changes in the sampling probabilities $p_e$, and that we may be able
to prove additional guarantees on $H$ by tweaking them. In this section, we
prove one such claim. 

The following property is desirable for using $H$ to solve linear systems
(specifically, for the construction of {\em ultrasparsifiers} \cite{st04,st06}, which we will not 
define here):
\begin{equation}\label{ultraprop} \textrm{For every vertex $v\in V,$}\quad
\sum_{e\ni v} \frac{\tilde{w}_e}{w_e}\le 2\deg(v).\end{equation}
This says, roughly, that not too many of the edges incident to any given vertex
get blown up too much by sampling and rescaling.
We show how to incorporate this property into our sparsifiers.
\begin{lem} \label{mindeg} Suppose we sample $q > 4 n \log n / \beta $ edges of $G$ as in
$\mathbf{Sparsify}$ with probabilities that satisfy
\[ p_{(u,v)} \ge \frac{\beta}{n\min(\deg(u),\deg(v))}\] for some constant
$0<\beta<1$. Then with probability at least $1-1/n$,
\[ \sum_{e\ni v} \frac{\tilde{w}_e}{w_e} \le 2\deg(v)\quad\textrm{for all $v\in
V$.}\]
\end{lem}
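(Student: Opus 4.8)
The plan is to fix a vertex $v$, write $\sum_{e\ni v}\tilde w_e/w_e$ as a sum of $q$ i.i.d.\ nonnegative bounded random variables, estimate its mean, apply a multiplicative Chernoff bound, and finish with a union bound over all $n$ vertices.

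First I would set up the per-sample decomposition. Index the $q$ samples by $j=1,\dots,q$, and let $e_j$ be the edge drawn in the $j$-th sample. For a fixed $v$ put $Y_j=\mathbf 1[e_j\ni v]/(qp_{e_j})$. Since the number of times an edge $e$ is sampled is $\sum_{j}\mathbf 1[e_j=e]$ and $\tilde w_e/w_e=(\#\text{ times }e\text{ sampled})/(qp_e)$ by (\ref{defS}), we get $\sum_{e\ni v}\tilde w_e/w_e=\sum_{j=1}^q Y_j$. The $Y_j$ are i.i.d., and since $\E\tilde w_e=w_e$ we have $\E\sum_{e\ni v}\tilde w_e/w_e=\deg(v)$, i.e.\ $\E Y_j=\deg(v)/q$. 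Next, for any edge $e=(u,v)\ni v$ we have $\min(\deg(u),\deg(v))\le\deg(v)$, so the hypothesis $p_{(u,v)}\ge\beta/(n\min(\deg(u),\deg(v)))$ gives $p_e\ge\beta/(n\deg(v))$ and hence $0\le Y_j\le n\deg(v)/(q\beta)=:M$. Rescaling, $\sum_j(Y_j/M)$ is a sum of independent $[0,1]$-valued variables with mean $\mu:=\deg(v)/M=q\beta/n$, and the assumption $q>4n\log n/\beta$ makes $\mu>4\log n$.

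Now apply a multiplicative Chernoff bound for sums of independent $[0,1]$-valued random variables:
\[
\P\Big[\sum_{e\ni v}\tfrac{\tilde w_e}{w_e}\ge 2\deg(v)\Big]
=\P\Big[\sum_j \tfrac{Y_j}{M}\ge 2\mu\Big]
\le \Big(\tfrac e4\Big)^{\mu}
\le \Big(\tfrac e4\Big)^{4\log n}
\le \tfrac1{n^{2}},
\]
where the last step is where the numerical constants (the factor $4$ in the bound on $q$ and the target $2\deg(v)$) are tuned. A union bound over the $n$ vertices then gives that with probability at least $1-n\cdot n^{-2}=1-1/n$ the inequality $\sum_{e\ni v}\tilde w_e/w_e\le 2\deg(v)$ holds simultaneously for every $v\in V$, as claimed.

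The only real obstacle is the last inequality: one must pick the precise form of the Chernoff (or Bernstein) estimate carefully so that, given only $q>4n\log n/\beta$, the per-vertex failure probability is at most $1/n^{2}$, which is exactly what the union bound over the $n$ vertices needs. Everything else — the decomposition into i.i.d.\ summands, the mean computation via $\E\tilde w_e=w_e$, and the uniform bound $Y_j\le n\deg(v)/(q\beta)$ coming from $\min(\deg(u),\deg(v))\le\deg(v)$ — is routine.
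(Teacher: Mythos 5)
Your setup is correct and closely mirrors the paper's: the decomposition $\sum_{e\ni v}\tilde w_e/w_e=\sum_{j=1}^q Y_j$ with $Y_j=\mathbf 1[e_j\ni v]/(qp_{e_j})$, the mean $\E\sum_j Y_j=\deg(v)$, and the almost-sure bound $Y_j\le M:=n\deg(v)/(q\beta)$ coming from $\min(\deg(u),\deg(v))\le\deg(v)$ all check out (your $Y_j$ is $X_i/q$ in the paper's notation). The multiplicative Chernoff bound $\P[\sum_j Y_j/M\ge 2\mu]\le(e/4)^\mu$ with $\mu=q\beta/n>4\log n$ is also a correct statement. The problem is the final numerical inequality. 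With natural logarithms, which is what the paper evidently uses (it writes $\exp(-2\log n)=1/n^2$), one has
\[
\left(\tfrac{e}{4}\right)^{4\log n}=n^{4(1-\ln 4)}\approx n^{-1.545},
\]
which is strictly larger than $1/n^2$. After the union bound over the $n$ vertices this gives a failure probability of roughly $n^{-0.545}$, not $1/n$. To make a pure uniform-bound Chernoff argument close, you would need $\mu\gtrsim\frac{2}{\ln(4/e)}\log n\approx 5.18\log n$, i.e.\ $q\gtrsim 5.2\,n\log n/\beta$, which is strictly worse than the lemma's threshold of $4n\log n/\beta$.

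The paper avoids this loss by using more information than the uniform bound alone: it explicitly estimates the variance, $\var(X_i)\le\sum_{e\ni v}1/p_e\le n\deg(v)^2/\beta$, and then applies Bennett's inequality, and it is this variance-based tail bound that yields the per-vertex probability $\le 1/n^2$ at $q>4n\log n/\beta$. So the missing ingredient in your argument is the variance estimate; without it a multiplicative Chernoff bound is a constant factor too weak in the exponent to reach $1/n^2$ at the stated sampling rate. You did flag the last inequality as the delicate step, and that is indeed exactly where the argument breaks.
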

\begin{proof} 
For a vertex $v$, define i.i.d. random variables $X_1,\ldots, X_q$ by:
\[
 X_i = \left\{\begin{array}{ll} \frac{1}{p_e} & \textrm{if $e\ni v$
  is the $i$th edge chosen}\\ 0 & \textrm{otherwise}\end{array}\right.
\]
so that $X_{i}$ is set to $1/p_{e}$ with probability $p_{e}$ for each
  edge $e$ attached to $v$.
Let
\[ D_v =  \sum_{e\ni v} \frac{\tilde{w_e}}{w_e} = 
\sum_{e\ni v}\frac{\textrm{(\# of times $e$ is sampled)}}{qp_e} =
\frac{1}{q}\sum_{i=1}^q X_i.\]
We want to show that with high probability, $D_v\le 2\deg(v)$ for {\em all} vertices $v$.
We begin by bounding the expectation and variance of each $X_i$:
\begin{align*}
\E X_i &= \sum_{e\ni v} p_e\frac{1}{p_e}=\deg(v)\\
\\\var (X_i) &= \sum_{e\ni v}p_e\left(\frac{1}{p_e^2}-\frac{1}{p_e}\right)
\\&\le\sum_{e\ni v} \frac{1}{p_e}
\\&\le \sum_{(u,v)\ni v} \frac{n\min(\deg(u),\deg(v))}{\beta}\quad\textrm{by
assumption}
\\&\le \sum_{(u,v)\ni v}\frac{n\deg(v)}{\beta}
\\&= \frac{n\deg(v)^2}{\beta}\end{align*}

Since the $X_i$ are independent, the variance of $D_v$ is just
\[ \var(D_v)=\frac{1}{q^2}\sum_{i=1}^q\var(X_i)\le\frac{n\deg(v)^2}{\beta q}.\]
We now apply Bennett's inequality for sums of i.i.d. variables (see, e.g.,
\cite{lugosi}), which says
\[ \P[|D_v-\E D_v|>\E D_v]\le \exp\left(\frac{-(\E D_v)^2}{\var(D_v)(1+\frac{\E
D_v}{q})}\right)\]
We know that $\E D_v=\E X_i=\deg(v)$. Substituting our estimate for $\var(D_v)$
and setting $q\ge 4n\log n/\beta$ gives:
\begin{align*}
\P[D_v>2\deg(v)] &\le \exp\left(\frac{-\deg(v)^2}{\frac{n\deg(v)^2}{\beta q}(1+\frac{\deg(v)}{q})}\right)
\\&\le \exp\left(\frac{-\beta q}{2n}\right)\quad\textrm{since
$1+\frac{\deg(v)}{q}\le 2$}\\
\\&\le \exp\left(-2\log n\right)=1/n^2.\end{align*}
Taking a union bound over all $v$ gives the desired result.\end{proof}

Sampling with probabilities
\[p'_e=p'_{(u,v)}=\frac{1}{2}\left(\frac{\|Zb_e^T\|^2w_e}{\sum_e
\|Zb_e^T\|^2w_e} + \frac{1}{n\min(\deg(u),\deg(v))}\right)\] satisfies the
requirements of both Corollary \ref{approxok} (with $\alpha=2$) and Lemma \ref{mindeg} (with
$\beta=1/2$) and yields a sparsifier with the desired property.
\begin{thm} There is an $\Otilde(m/\epsilon^2)$ time algorithm which on input
$G=(V,E,w),\epsilon>0$ produces a weighted subgraph $H=(V,\tilde{E},\tilde{w})$
of $G$ with $O(n\log n/\epsilon^2)$ edges which, with probability at least
$1/2$, satisfies both (\ref{xtlx}) and
(\ref{ultraprop}).\end{thm}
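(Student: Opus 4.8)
The plan is to package Theorem~\ref{th2}, Corollary~\ref{approxok}, and Lemma~\ref{mindeg} onto a single sampling distribution. First I would call the algorithm of Theorem~\ref{th2} with a fixed small constant error parameter $\epsilon_0$ (say $\epsilon_0=1/3$), obtaining in $\Otilde(m)$ time a matrix $\Ztilde$ with $O(\log n)$ rows such that $Z_e \defeq \norm{\Ztilde b_e^T}^2$ satisfies $(1-\epsilon_0)R_e \le Z_e \le (1+\epsilon_0)R_e$ for every edge $e$; thus $\{Z_e\}$ is a constant-factor approximation of $\{R_e\}$. Then I would run $\mathbf{Sparsify}$ with $q = \Theta(n\log n/\epsilon^2)$ samples, drawing each sample from the mixed distribution
\[
  p'_{(u,v)} \;=\; \frac12\left(\frac{w_{uv} Z_{uv}}{\sum_e w_e Z_e} \;+\; \frac{1}{n\min(\deg(u),\deg(v))}\right)
\]
(this is nonnegative and sums to at most $1$, since $\sum_{(u,v)\in E}\frac1{\min(\deg(u),\deg(v))} \le \sum_{(u,v)\in E}\bigl(\tfrac1{\deg(u)}+\tfrac1{\deg(v)}\bigr)=n$; normalize if it is $<1$, which only increases each $p'_e$). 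Since only $q$ samples are drawn, $H$ has at most $q = O(n\log n/\epsilon^2)$ edges. The point of the mixture is that its two ``floor'' terms make $p'$ satisfy the hypotheses of Corollary~\ref{approxok} and Lemma~\ref{mindeg} at once.

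For (\ref{xtlx}): because $Z_e$ is a good enough constant-factor approximation we have $Z_e\ge R_e/2$ and $\sum_e w_e Z_e \le 2\sum_e w_e R_e$, so $p'_e \ge \tfrac12\, w_e Z_e/\sum_e w_e Z_e \ge p_e/4$, where $p_e = w_eR_e/(n-1)$ is the ideal probability of Theorem~\ref{mainthm}. The proof of Corollary~\ref{approxok} uses only the inequality $p'_e \ge p_e/\alpha^2$ (to bound the sampled vector by $\alpha\sqrt{n-1}$), so it goes through with $\alpha=2$; running everything with $\epsilon$ replaced by a suitable constant multiple of $\epsilon$ and with $q$ inflated by a constant (which sharpens the Markov step from the $1/2$ of Corollary~\ref{approxok} to, say, $2/3$) then yields (\ref{xtlx}) with probability at least $2/3$.

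For (\ref{ultraprop}): by construction $p'_{(u,v)} \ge \frac{1/2}{n\min(\deg(u),\deg(v))}$, which is exactly the hypothesis of Lemma~\ref{mindeg} with $\beta = 1/2$, and $q = \Theta(n\log n/\epsilon^2) \ge 8n\log n = 4n\log n/\beta$ since $\epsilon\le 1$ and the constant hidden in $q$ may be taken large enough. Hence Lemma~\ref{mindeg} gives (\ref{ultraprop}) with probability at least $1-1/n$. A union bound over the two failure events shows $H$ satisfies both (\ref{xtlx}) and (\ref{ultraprop}) with probability at least $2/3 - 1/n \ge 1/2$ for $n$ large. The running time is dominated by the call to Theorem~\ref{th2} ($\Otilde(m/\epsilon^2)$, in fact $\Otilde(m)$ since $\epsilon_0$ is constant), plus $O(m\log n)$ to form all the $Z_e=\norm{\Ztilde(\chi_u-\chi_v)}^2$ by subtracting columns of $\Ztilde$, plus $O(m)$ for the degrees, plus $\Otilde(n/\epsilon^2)$ to draw $q$ samples with a prefix-sum/binary-search data structure; in total $\Otilde(m/\epsilon^2)$.

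I do not expect a genuine obstacle: the statement is essentially a repackaging of Corollary~\ref{approxok}, Lemma~\ref{mindeg}, and Theorem~\ref{th2} on one common distribution. The only points needing care are (i) tracking constants so the two success probabilities union-bound strictly above $1/2$ --- handled by inflating $q$ --- and (ii) noticing that the mixed probabilities, though different from the purely resistance-proportional ones in the statement of Corollary~\ref{approxok}, still satisfy it because its proof depends only on the lower bound $p'_e \ge p_e/\alpha^2$. Everything else is bookkeeping already carried out in the cited results.
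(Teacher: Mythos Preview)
Your proposal is correct and follows essentially the same approach as the paper: define the mixed distribution
\[
p'_{(u,v)}=\frac{1}{2}\left(\frac{\|\Ztilde b_e^T\|^2 w_e}{\sum_e \|\Ztilde b_e^T\|^2 w_e} + \frac{1}{n\min(\deg(u),\deg(v))}\right),
\]
observe that its two halves supply the lower bounds needed for Corollary~\ref{approxok} (with $\alpha=2$, using only that the proof there depends on $p'_e\ge p_e/\alpha^2$) and Lemma~\ref{mindeg} (with $\beta=1/2$), and combine. You have filled in more of the bookkeeping (normalization, union bound on the two failure events, running-time accounting) than the paper does, but the core argument is identical.
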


\end{document}